\documentclass[sigconf]{acmart}
\usepackage{microtype}
\usepackage{graphicx}
\usepackage{multirow}
\usepackage{multicol}
\usepackage{subcaption}
\usepackage{graphicx}
\usepackage{booktabs} 
\usepackage{bm}
\usepackage{hyperref}

\usepackage{amsmath}
\usepackage{mathtools}
\usepackage{amsthm}

\newtheorem{theorem}{Theorem}[section]
\newtheorem{proposition}[theorem]{Proposition}

\AtBeginDocument{%
  }

\setcopyright{acmlicensed}
\copyrightyear{2018}
\acmYear{2018}
\acmDOI{XXXXXXX.XXXXXXX}
\acmISBN{978-1-4503-XXXX-X/2018/06}

\begin{document}

\title{DIRECTOR: Dynamic Index-based Recommendation with Transport-Optimized Retrieval}

\author{Yuanhao Pu}
\authornote{These authors contributed equally to this research.}
\email{puyuanhao@mail.ustc.edu.cn}
\affiliation{%
  \institution{University of Science \& Technology of China}
  \city{Hefei}
  \state{Anhui}
  \country{China}
}

\author{Chenghao Zhang}
\authornotemark[1]
\email{zhangchenghao03@kuaishou.com}
\affiliation{%
  \institution{Kuaishou Technology}
  \city{Beijing}
  \country{China}
}

\author{Chao Feng}
\authornotemark[1]
\email{fengchao08@kuaishou.com}
\correspondingauthor
\affiliation{%
  \institution{Kuaishou Technology}
  \city{Beijing}
  \country{China}
}

\author{Xiang Li}
\affiliation{%
  \institution{Kuaishou Technology}
  \city{Beijing}
  \country{China}}

\author{Defu Lian}
\affiliation{%
  \institution{University of Science \& Technology of China}
  \city{Hefei}
  \state{Anhui}
  \country{China}}


\renewcommand{\shortauthors}{Pu et al.}

\begin{abstract}
Reranking is a combinatorial decision problem that aims to select and order a high-utility slate from a request-specific candidate set. A major line of generative rerankers adopts autoregressive (\textbf{AR}) models, which construct the slate one position at a time to capture inter-position dependencies. However, under practical greedy or bounded-width decoding, prefix-based search may prematurely prune globally promising permutations and incurs inherently sequential latency, restricting the effective search space under a fixed serving budget. Non-autoregressive (\textbf{NAR}) alternatives alleviate this efficiency bottleneck through position-parallel prediction, but naive position-wise factorization treats different positions too independently, leading to insufficient cross-position coordination and potentially duplicate or conflicting item selections. To retain parallel efficiency while introducing global structural coordination, we propose \textbf{D}ynamic \textbf{I}ndex-based \textbf{REC}ommendation with \textbf{T}ransport-\textbf{O}ptimized \textbf{R}etrieval (\textbf{DIRECTOR}), a transport-guided parallel reranking framework. \textbf{DIRECTOR} maps candidate items into a continuous latent space and generates request-conditioned dynamic retrieval indices for all target positions in parallel. During training, it uses entropy-regularized \textbf{OT}
to provide conflict-aware supervision; at inference,
it directly performs global hard matching on similarity
matrix, producing duplicate-free slates without iterative transport. To further align the generator with an opaque list-wise evaluator that returns only a scalar utility, we introduce a prefix-anchored credit assignment mechanism that converts the global reward into position-specific training signals. Extensive offline and online experiments demonstrate that \textbf{DIRECTOR} consistently outperforms strong reranking baselines, achieving significant improvement in large-scale industrial recommendation scenarios.

\end{abstract}

\begin{CCSXML}
<ccs2012>
   <concept>
       <concept_id>10002951.10003317.10003347.10003350</concept_id>
       <concept_desc>Information systems~Recommender systems</concept_desc>
       <concept_significance>500</concept_significance>
       </concept>
   <concept>
       <concept_id>10002951.10003317.10003338.10003343</concept_id>
       <concept_desc>Information systems~Learning to rank</concept_desc>
       <concept_significance>500</concept_significance>
       </concept>
   <concept>
       <concept_id>10002950.10003714.10003716</concept_id>
       <concept_desc>Mathematics of computing~Mathematical optimization</concept_desc>
       <concept_significance>300</concept_significance>
       </concept>
 </ccs2012>
\end{CCSXML}

\ccsdesc[500]{Information systems~Recommender systems}
\ccsdesc[500]{Information systems~Learning to rank}
\ccsdesc[300]{Mathematics of computing~Mathematical optimization}


\received{20 February 2007}
\received[revised]{12 March 2009}
\received[accepted]{5 June 2009}

\maketitle

\section{Introduction}
\label{sec:intro}

Reranking is a critical stage in modern large-scale recommendation systems, where a request-specific candidate set produced by upstream retrieval and ranking modules is refined into a sequence slate. Unlike independent point-wise scoring, the quality of a reranked slate cannot generally be fully characterized by the sum of its individual item utilities. It also depends on position effects, competition and complementarity among selected items, and the overall coherence of the slate. Reranking is therefore inherently a structured combinatorial decision problem: given a candidate set of size $M$, selecting and ordering $n$ distinct items induces a search space of $\mathrm{P}(M,n)=M!/(M-n)!$ possible permutations. Even for a moderate setting with $M=50$ and $n=10$, this space already contains approximately $3.73\times 10^{16}$ valid slates.

A growing line of recent reranking research adopts a two-stage Generator-Evaluator (\textbf{G-E}) paradigm~\cite{shi2023pier,lin2023discrete,ren2024nar4rec,yang2025comprehensive}. The Generator explores the combinatorial space and proposes multiple candidate slates, while a list-wise Evaluator estimates their holistic utilities and selects the final output. A major class of learned generators is autoregressive (\textbf{AR}), constructing a slate one item at a time conditioned on previously generated items~\cite{bello2018seq2slate,feng2021grn}. Such a factorization provides a natural mechanism for modeling inter-position dependencies. However, it also couples \emph{structural dependence} with \emph{sequential computation}. Under bounded-width beam-search-like decoding~\cite{wiseman2016sequence,gong2022realtime}, a prefix that falls outside the retained beam is irrevocably pruned according to its current generator score, even though it may still be extended into a globally superior slate once later item interactions are taken into account. Moreover, each decoding step must wait for the preceding selection, resulting in inherently sequential inference~\cite{ren2024nar4rec}. Consequently, under the strict serving budget of an industrial system, only a small subset of the full permutation space, largely determined by early prefix decisions, can be effectively explored~\cite{gong2022realtime,shi2023pier}.

Non-autoregressive (\textbf{NAR}) reranking offers a promising alternative by replacing left-to-right decoding with parallel position-wise prediction or joint slate construction~\cite{jiang2019listcvae,ren2024nar4rec,xu2026omgrec}. Diffusion-based generators further relax the fixed autoregressive order, although they generally require multiple iterative denoising steps~\cite{lin2023discrete}. Nevertheless, parallel prediction alone does not guarantee a globally coordinated slate. When position-wise distributions are decoded independently, different positions may compete for the same high-probability items, producing duplicate or poorly coordinated assignments that require additional resolution. Some methods introduce dependencies through previously selected-item masking or contrastive decoding~\cite{ren2024nar4rec}, but such mechanisms reintroduce a sequential selection procedure after the parallel forward pass. More recent one-shot matching approaches construct a position-candidate allocation matrix and incorporate request-level or permutation-level modeling~\cite{xu2026omgrec}. However, two challenges remain in the Generator-Evaluator setting: how to produce sufficiently exploratory, request-conditioned position intents that adapt to a dynamically changing candidate set, and how to optimize the resulting coupled generator when the downstream Evaluator exposes only a scalar utility for the complete slate. The central question is therefore: \emph{Can we retain position-parallel generation while globally coordinating position assignments and learning from opaque list-wise feedback?}

To address this question, we propose \textbf{DIRECTOR} (\textbf{D}ynamic \textbf{I}ndex-based \textbf{REC}ommendation with \textbf{T}ransport-\textbf{O}ptimized \textbf{R}etrieval), a globally coordinated, position-parallel reranking framework. Instead of directly predicting a discrete item at each position, \textbf{DIRECTOR} jointly generates a matrix of request-conditioned dynamic retrieval indices for all target positions. Each index serves as a continuous, position-aware latent query conditioned on both the user context and the current candidate set, rather than relying solely on a static position embedding or directly representing a discrete item identifier. Candidate items are mapped into the same latent space, allowing each position index to express a flexible retrieval preference over the request-specific candidate pool. By sampling multiple latent index matrices from a conditional generative distribution, \textbf{DIRECTOR} produces multiple structured slate proposals without autoregressive rollouts.

The position-wise indices must nevertheless be globally coordinated, since a valid slate cannot assign the same item to multiple positions. We therefore introduce an entropy-regularized, capacity-constrained
optimal transport (OT) objective during training to couple
position-wise preferences and expose their competition for candidate
capacity~\cite{cuturi2013sinkhorn}. The shared transport feasible set couples all position--item assignments, requiring every target position to be filled while limiting the allocation capacity of each candidate. At inference, we bypass the soft transport solver and directly compute
a global discrete assignment from the position--candidate similarity
matrix, producing a valid duplicate-free slate. Unlike independent row-wise retrieval followed by post-hoc conflict resolution, the global hard assignment considers all position--candidate scores jointly, without introducing an \textbf{AR} decoding chain. \textbf{DIRECTOR} thus removes item-by-item sequential dependence while preserving cross-position coordination. However, learning such a coupled generator is challenging when the downstream Evaluator is opaque or non-differentiable and returns only a scalar utility for a complete slate. Broadcasting the same global reward to every position provides only coarse credit assignment. We therefore construct a validity-preserving, prefix-anchored path between the generated and baseline slates and define each position-wise advantage as the reward difference between two adjacent hybrid slates along this path. These local advantages form an exact telescoping decomposition of the global reward improvement and provide fine-grained signals for transport-relaxed reward-guided optimization.

Our main contributions are summarized as follows:
\begin{itemize}
    \item We propose \textbf{DIRECTOR}, a position-parallel reranking framework that generates request-conditioned dynamic retrieval indices, uses capacity-constrained optimal transport for conflict-aware training, performs direct global matching for duplicate-free decoding without autoregressive rollouts, and introduces prefix-anchored credit assignment to derive position-specific learning signals from opaque list-wise rewards.
    
    \item We theoretically characterize the assignment geometry and capacity-induced coupling of the transport surrogate, bound the approximation error introduced by entropy regularization, and analyze finite-proposal coverage and online inference complexity.
    
    \item Extensive offline and online experiments demonstrate that \textbf{DIRECTOR} achieves superior effectiveness and practical inference efficiency over strong \textbf{AR} and \textbf{NAR} reranking baselines in large-scale recommendation scenarios.
\end{itemize}

\begin{figure*}[t]
    \centering
    \includegraphics[width=\textwidth]{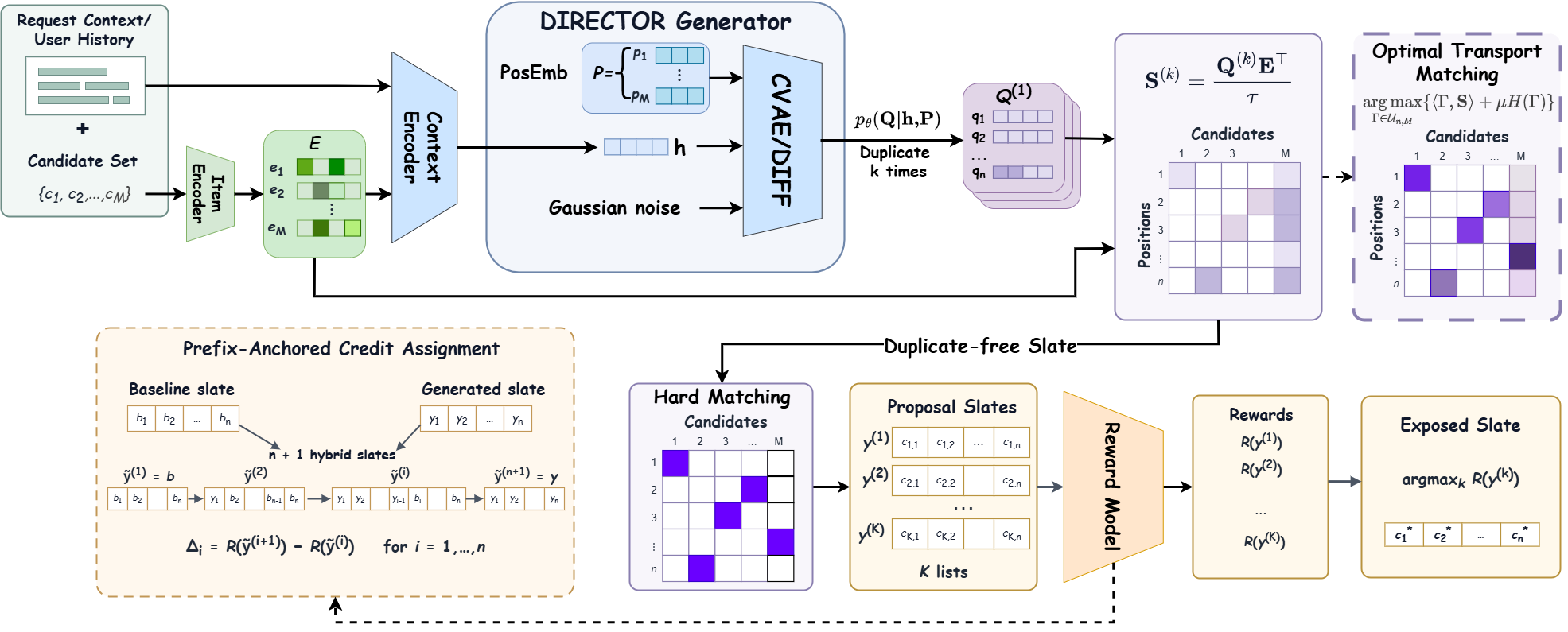}
    \caption{The framework of DIRECTOR}
    \label{fig:director}
\end{figure*}
\section{Related Work}
\label{sec:related_work}

\subsection{From Context-Aware Scoring to Generator--Evaluator Reranking}

Reranking is typically deployed near the end of a multi-stage recommender system to refine a request-specific candidate set into an ordered slate. Early ranking models score candidates independently and obtain the output by sorting their item-wise scores, overlooking the influence of positions and surrounding items. Context-aware
rerankers address this limitation by explicitly modeling intra-list interactions. DLCM~\cite{ai2018learning} propagates contextual information along the initial ranking list with a recurrent architecture. PRM~\cite{pei2019personalized} employs self-attention to capture mutual influences among candidate items, while SetRank~\cite{pang2020setrank} uses permutation-invariant set modeling for joint document ranking. Subsequent methods further incorporate personalized histories and multi-level interactions~\cite{li2022pear,xi2022multilevel}. Beyond score-based refinement, generative rerankers directly construct slates from the candidate permutation space. PRS~\cite{feng2021revisit} reformulates recommendation from a permutation perspective, and Seq2Slate~\cite{bello2018seq2slate} generates ordered slates autoregressively. More recent industrial approaches optimize generated lists under multi-objective settings~\cite{meng2025generative} or jointly incorporate causal and utility signals~\cite{zhang2026dual}. These Generator-only methods improve list construction through contextual scoring or structured generation, but do not explicitly generate and evaluate multiple slate proposals with a separate list-wise Evaluator.

A growing line of work adopts the Generator--Evaluator (\textbf{G-E})
paradigm, in which a Generator explores the permutation space and an
Evaluator estimates the utilities of the proposed slates.
Generator-and-Critic~\cite{wei2020generator} and GRN~\cite{feng2021grn}
optimize sequential slate generators using a learned Critic or Evaluator.
JDRec~\cite{zhao2024jdrec} develops an actor--critic framework with an
emphasis on practical deployment and policy bootstrapping.
PIER~\cite{shi2023pier} jointly improves permutation generation and
list-wise evaluation, while recent multi-generator methods enlarge the
proposal space by combining complementary generation policies
~\cite{yang2025comprehensive}.

The G-E decomposition separates slate exploration from utility
estimation, but its effectiveness remains constrained by the Generator. Moreover, when the Evaluator is opaque and
provides only a scalar utility for each complete slate, obtaining
fine-grained supervision for individual generation decisions remains challenging. Our work focuses on generator-side problems while treating the Evaluator as a list-wise reward model.

\subsection{Autoregressive and Non-Autoregressive Slate Generation}

A major line of generative reranking methods adopts an
autoregressive (\textbf{AR}) factorization, selecting each item
conditioned on the previously generated prefix. Seq2Slate
~\cite{bello2018seq2slate} uses a pointer network to generate the
slate item by item, while GFN4Rec~\cite{liu2023generative} models slate
construction as a sequential trajectory and learns to sample diverse
high-reward lists. Prefix conditioning provides a natural mechanism
for modeling dependencies among selected items and permits previously
selected candidates to be masked. However, it also introduces an
item-by-item dependency chain, and bounded-width beam decoding can
only explore continuations of the prefixes retained at intermediate
steps.

Non-autoregressive (\textbf{NAR}) approaches avoid fixed
left-to-right item generation through several different mechanisms. List-CVAE~\cite{jiang2019listcvae} models a conditional distribution over
complete slates using latent-variable generation. NAR4Rec
~\cite{ren2024nar4rec} predicts target positions in parallel and
introduces matching, sequence-level unlikelihood training, and
contrastive decoding to improve intra-slate consistency, although its
final decoding procedure still depends on previously selected items.
DCDR~\cite{lin2023discrete} generates permutations through iterative
discrete diffusion, removing the fixed autoregressive order at the
cost of multiple denoising steps. Most closely related to our work,
OMGRec~\cite{xu2026omgrec} directly constructs a request-aware
position--candidate allocation matrix and applies one-time matching
with permutation-level modeling.

In contrast, \textbf{DIRECTOR} first samples continuous dynamic
retrieval indices rather than directly predicting candidate
allocations, enabling multiple slate proposals through latent-space
sampling. It then uses capacity-constrained soft transport to couple position-wise assignments during training, and directly applies global hard matching for valid decoding, and further optimizes the coupled generator from opaque complete-slate rewards through prefix-anchored pathwise credit assignment.
\section{Preliminaries}
In this section, we formalize list-wise reranking under the
Generator--Evaluator (\textbf{G-E}) paradigm. We then review
\textbf{AR} generation and sequence-level reward-guided
optimization, which motivate the globally coordinated parallel
policy introduced in Section~\ref{sec:method}.

\subsection{Problem Formulation}
\label{subsec:problem_formulation}

Let $\mathcal{V}$ denote the entire item corpus. Given a request,
let $\bm{x}$ denote the complete request context, including a
historical interaction sequence
$\bm{x}^{\mathrm{hist}}=(x_1,x_2,\ldots,x_T)$ with
$x_t\in\mathcal{V}$, together with additional user- and
request-level features. A preceding retrieval or ranking stage
provides a request-specific candidate set
$\mathcal{C}=\{c_1,c_2,\ldots,c_M\}\subseteq\mathcal{V}$,
where $M=|\mathcal{C}|$.

Given a target slate size $n$ with $1\le n\le M$, the reranking
task selects and orders $n$ distinct items from $\mathcal{C}$ to
form a slate $\bm{y}=(y_1,y_2,\ldots,y_n)$. Let
$[n]=\{1,\ldots,n\}$. The feasible slate space is
\begin{equation}
    \mathcal{S}(\mathcal{C},n)
    =
    \left\{
        \bm{y}\in\mathcal{C}^{n}
        \,\middle|\,
        y_i\neq y_j,\ 
        \forall i,j\in[n],\ i\neq j
    \right\},
    \label{eq:feasible_slate_space}
\end{equation}
whose cardinality is $ \left|\mathcal{S}(\mathcal{C},n)\right|=\mathrm{P}(M,n)=\frac{M!}{(M-n)!}$.
Let $U(\bm{y}\mid\bm{x},\mathcal{C})\in\mathbb{R}$ denote the
underlying list-wise utility. An ideal reranking decision satisfies
\begin{equation}
    \bm{y}^{\star}
    \in
    \operatorname*{arg\,max}_{\bm{y}\in
    \mathcal{S}(\mathcal{C},n)}
    U(\bm{y}\mid\bm{x},\mathcal{C}).
    \label{eq:optimal_slate}
\end{equation}
Therefore, an exact optimization may require searching over $\mathrm{P}(M,n)$ feasible slates, which is computationally prohibitive.

\subsection{The Generator-Evaluator Paradigm}
\label{subsec:generator_evaluator}

The G-E paradigm approximates Eq.~\eqref{eq:optimal_slate} by
restricting the decision to a finite collection of generated proposals
and using an Evaluator as a proxy for the underlying list-wise utility.
It thereby separates candidate-slate exploration from utility
estimation.

\textbf{Generator.}
A valid Generator induces a stochastic policy
$\pi_{\theta}(\bm{y}\mid\bm{x},\mathcal{C})$ over
$\mathcal{S}(\mathcal{C},n)$. Given a request, it produces a
collection of $K$ proposal slates
$\mathcal{Y}_{1:K}=\left(\bm{y}^{(1)},\bm{y}^{(2)},\ldots,\bm{y}^{(K)}\right)$. Under stochastic generation,
$\bm{y}^{(k)}\sim
\pi_{\theta}(\cdot\mid\bm{x},\mathcal{C})$; alternatively,
$\mathcal{Y}_{1:K}$ may be obtained through an approximate search
procedure such as beam search.

A conventional \textbf{AR} Generator factorizes the joint policy according to the generation order:
\begin{equation}
    \pi_{\theta}^{\mathrm{AR}}
    (\bm{y}\mid\bm{x},\mathcal{C})
    =
    \prod_{t=1}^{n}
    \pi_{\theta,t}
    \left(
        y_t
        \mid
        \bm{y}_{<t},\bm{x},\mathcal{C}
    \right),
    \label{eq:ar_generation}
\end{equation}
where $\bm{y}_{<t}=(y_1,\ldots,y_{t-1})$ denotes the generated
prefix. For a valid slate policy, each conditional distribution is
supported on the remaining candidates
$\mathcal{C}\setminus\{y_1,\ldots,y_{t-1}\}$.
Conditioning on $\bm{y}_{<t}$ provides a natural mechanism for
modeling dependencies among selected items and allows previously
selected candidates to be masked. However, at inference time, this
factorization imposes a sequential decoding chain in which the
decision at position $t$ cannot be made before the preceding
positions have been instantiated.

A naive \textbf{NAR} alternative predicts all positions simultaneously:
\begin{equation}
    \widetilde{\pi}_{\theta}^{\mathrm{NAR}}
    (\bm{y}\mid\bm{x},\mathcal{C})
    =
    \prod_{t=1}^{n}
    p_{\theta,t}
    (y_t\mid\bm{x},\mathcal{C}),
    \quad
    \bm{y}\in\mathcal{C}^{n}.
    \label{eq:factorized_nar}
\end{equation}
While Eq.~\eqref{eq:factorized_nar} permits position-parallel
prediction, its product distribution is defined over $\mathcal{C}^{n}$ rather than a valid policy over $\mathcal{S}(\mathcal{C},n)$. Consequently, different positions
may assign the same candidate multiple times. Even when duplicates
are resolved post hoc, the factorized decisions do not jointly
account for competition over candidate capacity. This mismatch
motivates a structured parallel policy that generates position-wise
intents simultaneously while coordinating their discrete assignments
globally.

\textbf{Evaluator.}
The Evaluator is a list-wise reward model
$R_{\phi}(\bm{y}\mid\bm{x},\mathcal{C})\in\mathbb{R}$ that serves
as a proxy for the underlying utility $U$. It may be a learned
neural model, a composite business objective, or an opaque industrial
service whose internal parameters and gradients are unavailable to
the Generator. Given the proposal collection
$\mathcal{Y}_{1:K}$, the selected output is
\begin{equation}
    \hat{k}
    \in
    \operatorname*{arg\,max}_{k\in[K]}
    R_{\phi}
    \left(
        \bm{y}^{(k)}
        \mid
        \bm{x},\mathcal{C}
    \right),
    \quad
    \hat{\bm{y}}
    =
    \bm{y}^{(\hat{k})}.
    \label{eq:evaluator_argmax}
\end{equation}
The resulting quality depends on both the Generator's ability to
include high-utility slates in the proposal collection and the
Evaluator's fidelity in ranking those proposals according to $U$.

\subsection{Reward-Guided Generator Optimization}
\label{subsec:reward_guided_optimization}

Although online serving follows the best-of-$K$ decision in Eq.~\eqref{eq:evaluator_argmax}, a common surrogate for improving the Generator is to maximize the expected reward of a sampled slate:
\begin{equation}
    J(\theta)
    =
    \mathbb{E}_{\substack{
        (\bm{x},\mathcal{C})\sim\mathcal{D}\\
        \bm{y}\sim\pi_{\theta}(\cdot\mid\bm{x},\mathcal{C})
    }}
    \left[
        R_{\phi}(\bm{y}\mid\bm{x},\mathcal{C})
    \right],
    \label{eq:rl_objective}
\end{equation}
where $\mathcal{D}$ denotes the distribution over
request--candidate-set pairs. With the Evaluator fixed, a
score-function estimator gives
\begin{equation}
    \nabla_{\theta}J(\theta)
    =
    \mathbb{E}
    \left[
        \big(
            R_{\phi}(\bm{y}\mid\bm{x},\mathcal{C})
            -
            b(\bm{x},\mathcal{C})
        \big)
        \nabla_{\theta}
        \log\pi_{\theta}
        (\bm{y}\mid\bm{x},\mathcal{C})
    \right],
    \label{eq:general_pg}
\end{equation}
where $b(\bm{x},\mathcal{C})$ is an action-independent baseline. For \textbf{AR} Generator,
\begin{equation}
    \nabla_{\theta}
    \log\pi_{\theta}^{\mathrm{AR}}
    (\bm{y}\mid\bm{x},\mathcal{C})
    =
    \sum_{t=1}^{n}
    \nabla_{\theta}
    \log\pi_{\theta,t}
    \left(
        y_t
        \mid
        \bm{y}_{<t},\bm{x},\mathcal{C}
    \right).
    \label{eq:ar_score_decomposition}
\end{equation}
Thus, when only a complete-slate reward and a request-level baseline are available, the sequence-level
advantage is applied to every position-wise score term. Such supervision provides no direct indication of which selections improve or degrade the utility, particularly when the Evaluator captures cross-position interactions.

These observations motivate \textbf{DIRECTOR}, which replaces
item-by-item \textbf{AR} decoding with parallel dynamic-index generation and globally coordinated hard matching. It further derives position-specific pathwise credits from reward differences along a validity-preserving hybrid path, as detailed in Section~\ref{sec:method}.

\section{Methodology}
\label{sec:method}

We present \textbf{DIRECTOR}, a position-parallel reranking framework
that separates continuous intent generation from discrete slate
construction. As illustrated in Figure~\ref{fig:director},
\textbf{DIRECTOR} jointly generates request-conditioned dynamic
retrieval indices for all target positions. It uses capacity-constrained soft transport to coordinate position-wise preferences during training and direct global hard matching to construct valid slates at inference. Finally, a fixed list-wise Evaluator, which may be opaque or
non-differentiable, guides the Generator through prefix-anchored
pathwise credits.

Given the request context $\bm{x}$ and candidate set
$\mathcal{C}=\{c_1,\ldots,c_M\}$, let $\bm{v}_j$ denote the
available features associated with candidate $c_j$. An item encoder
maps each candidate into a shared retrieval space:
\begin{equation*}
    \bm{e}_j
    =
    f_{\mathrm{item}}(c_j,\bm{v}_j)
    \in\mathbb{R}^{d},
    \quad
    \bm{E}
    =
    [\bm{e}_1,\ldots,\bm{e}_M]^\top
    \in\mathbb{R}^{M\times d}.
\end{equation*}
A context encoder jointly summarizes the request context and the current candidate pool with $\bm{h}=f_{\mathrm{ctx}}(\bm{x},\bm{E})\in\mathbb{R}^{d_h}$. The candidate embeddings are computed once per request and reused across all target positions and sampled proposals. Unless otherwise specified, the parameters of $f_{\mathrm{item}}$ and $f_{\mathrm{ctx}}$ are included in the overall Generator parameters $\theta$.

\subsection{Dynamic Index Generation}
\label{subsec:query_generation}

Instead of predicting a discrete item for each output position, \textbf{DIRECTOR} generates a matrix of continuous retrieval indices,
\begin{equation*}
    \bm{Q}
    =
    [\bm{q}_1,\ldots,\bm{q}_n]^\top
    \sim
    p_{\theta}(\bm{Q}\mid\bm{h},\bm{P}),
    \qquad
    \bm{Q}\in\mathbb{R}^{n\times d},
\end{equation*}
where $\bm{P}=[\bm{p}_1,\ldots,\bm{p}_n]^\top$ contains learnable position embeddings. Each row $\bm{q}_i$ represents the retrieval intent of position $i$. All indices are generated jointly rather than conditioned on a previously selected item. Repeatedly sampling $\bm{Q}$ therefore produces multiple slate proposals without autoregressive rollouts.

\textbf{Training target.} For an observed slate $\bm{y}=(y_1,\ldots,y_n)\in\mathcal{S}(\mathcal{C},n)$, we define the target index at position $i$ by combining the item embedding with the corresponding position embedding:
\begin{equation*}
    \overline{\bm{Q}}
    =
    \left[
        \bm{e}(y_1)+\bm{p}_1,
        \ldots,
        \bm{e}(y_n)+\bm{p}_n
    \right]^\top.
\end{equation*}
We consider two conditional generative models for learning the
distribution of such index matrices.

\textbf{Conditional VAE.} During training, the posterior
$q_{\psi}(\bm{z}\mid\overline{\bm{Q}},\bm{h})$ infers a latent
variable from the target indices, while
$p_{\theta}(\bm{z}\mid\bm{h})$ serves as the inference prior. The
decoder jointly reconstructs all position indices with:
\begin{equation}
    \mathcal{L}_{\mathrm{CVAE}}
    =
    \mathbb{E}_{\bm{z}\sim
    q_{\psi}}
    \left[
        \left\|
            g_{\theta}(\bm{z},\bm{h},\bm{P})
            -
            \overline{\bm{Q}}
        \right\|_F^2
    \right]
    +
    \beta D_{\mathrm{KL}}
    \left(
        q_{\psi}
        \,\Vert\,
        p_{\theta}
    \right).
    \label{eq:cvae_loss}
\end{equation}
At inference, different samples yield different index matrices.

\textbf{Conditional diffusion.}
Alternatively, we regard
$\bm{Q}_0=\overline{\bm{Q}}$ as the clean index matrix and corrupt
it through
\begin{equation*}
    \bm{Q}_t
    =
    \sqrt{\overline{\alpha}_t}\bm{Q}_0
    +
    \sqrt{1-\overline{\alpha}_t}\bm{\epsilon},
    \quad
    \bm{\epsilon}\sim\mathcal{N}(\bm{0},\bm{I}),
\end{equation*}
where $\overline{\alpha}_t$ is the cumulative noise schedule. The
conditional denoiser is optimized by
\begin{equation}
    \mathcal{L}_{\mathrm{DIFF}}
    =
    \mathbb{E}_{t,\bm{\epsilon}}
    \left[
        \left\|
            \bm{\epsilon}
            -
            \epsilon_{\theta}
            (\bm{Q}_t,t,\bm{h},\bm{P})
        \right\|_F^2
    \right].
    \label{eq:diff_loss}
\end{equation}
At inference, the reverse process starts from Gaussian noise and
jointly updates all position indices at every denoising step.

Both implementations avoid an item-by-item \textbf{AR} chain.
CVAE generates the index matrix in one forward pass,
whereas diffusion performs few denoising steps that remain
parallel across positions. We use $\mathcal{L}_{\mathrm{Index}}$ to denote the selected index-generation
objective.

\subsection{Transport-Guided Parallel Retrieval}
\label{subsec:parallel_retrieval}

Given the dynamic indices $\bm{Q}$ and candidate embeddings $\bm{E}$,
we compute the position--candidate similarity matrix
\begin{equation*}
    \bm{S}
    =
    \frac{\bm{Q}\bm{E}^{\top}}{\tau}
    \in\mathbb{R}^{n\times M},
    \quad
    S_{ij}
    =
    \frac{\bm{q}_i^{\top}\bm{e}_j}{\tau},
\end{equation*}
where $S_{ij}$ measures how candidate $c_j$ matches output position $i$, and $\tau>0$ is a temperature. Selecting the highest-scoring candidate independently at each position may produce duplicate items. Therefore, we instead construct the slate through a global assignment:
\begin{equation}
\begin{aligned}
    \bm{A}^{\star}
    &=
    \operatorname*{arg\,max}_{\bm{A}\in\mathcal{A}_{n,M}}
    \langle\bm{A},\bm{S}\rangle,
    \\ 
    \mathcal{A}_{n,M}
    &=
    \left\{
        \bm{A}\in\{0,1\}^{n\times M}\middle|
        \bm{A}\bm{1}_{M}=\bm{1}_{n},
        \bm{A}^{\top}\bm{1}_{n}\leq\bm{1}_{M}
    \right\}.
    \label{eq:hard_assignment}
\end{aligned}
\end{equation}
We solve Eq.~\eqref{eq:hard_assignment} using a rectangular shortest-augmenting-path algorithm
~\cite{crouse2016implementing}. If $A_{ij}^{\star}=1$, candidate $c_j$ is placed at position $i$. The row constraints fill every output position, while the column constraints ensure that each candidate is selected at most once. The resulting slate is therefore complete and duplicate-free. 

\textbf{Transport-guided training.}
The hard assignment in Eq.~\eqref{eq:hard_assignment} is suitable for
slate construction but does not provide smooth gradients. During
training, we additionally compute an entropy-regularized soft assignment.
Let
\[
    \mathcal{U}_{n,M}
    =
    \left\{
        \bm{\Gamma}\in\mathbb{R}_{+}^{n\times M}
        \,\middle|\,
        \bm{\Gamma}\bm{1}_{M}=\bm{1}_{n},\;
        \bm{\Gamma}^{\top}\bm{1}_{n}\leq\bm{1}_{M}
    \right\}.
\]
The soft transport plan is
\begin{equation}
    \bm{\Gamma}_{\mu}^{\star}
    =
    \operatorname*{arg\,max}_{\bm{\Gamma}\in\mathcal{U}_{n,M}}
    \left\{
        \langle\bm{\Gamma},\bm{S}\rangle
        +
        \mu H(\bm{\Gamma})
    \right\},
    \ 
    H(\bm{\Gamma})
    =
    -
    \sum_{i=1}^{n}
    \sum_{j=1}^{M}
    \Gamma_{ij}\log\Gamma_{ij},
    \label{eq:ot}
\end{equation}
where $\mu>0$ controls the entropy regularization. The shared column constraints couple the position-wise assignments. When several positions prefer the same candidate, their transport masses are adjusted jointly rather than normalized independently. Following~\cite{cuturi2013sinkhorn}, we initialize
$\bm{K}=\exp(\bm{S}/\mu)$ and solve Eq.~\eqref{eq:ot} by alternating row normalization and column-capacity projection. Since the column constraint is an inequality, we adopt the Bregman-Dykstra Sinkhorn~\cite{benamou2015iterative}. At inference, we directly apply the hard assignment in Eq.~\eqref{eq:hard_assignment} to $\bm{S}$, avoiding iterative transport computation under strict industrial latency constraints.

To generate $K$ proposals, we independently sample
\[
    \bm{Q}^{(k)}
    \overset{\mathrm{i.i.d.}}{\sim}
    p_{\theta}(\cdot\mid\bm{h},\bm{P}),
    \quad
    k=1,\ldots,K,
\]
and solve Eq.~\eqref{eq:hard_assignment} for each corresponding
similarity matrix $\bm{S}^{(k)}$. This produces
$\{\bm{y}^{(1)},\ldots,\bm{y}^{(K)}\}$ for the downstream Evaluator.
The similarity computation and the $K$ independent matching problems
can be executed in parallel.

\subsection{Prefix-Anchored Credit and Reward-Guided Optimization}
\label{subsec:reward_optimization}

The hard decoder produces valid, duplicate-free slates, but it does not directly optimize their list-wise utility. We therefore use a fixed Evaluator $R_{\phi}(\bm{y}\mid\bm{x},\mathcal{C})$ to guide the Generator. Since the Evaluator returns only one scalar, applying the same reward to every position provides little information about individual assignments.

To obtain position-specific feedback, we build a valid path from a baseline slate $\bm b$ to the generated slate $\bm y$. Here, $\bm b\in\mathcal S(\mathcal C,n)$ denotes the logged exposure slate associated with the request. Starting from $\widetilde{\bm{y}}^{(1)}=\bm{b}$, step $i$ places $y_i$ at position
$i$. If $y_i$ already appears later in the current slate, the two items
are swapped; otherwise, the item at position $i$ is replaced. Each step
keeps the slate duplicate-free, and the final state is
$\widetilde{\bm{y}}^{(n+1)}=\bm{y}$. We define the credit for position $i$ as the reward change between two
adjacent slates:
\begin{equation*}
    \Delta_i
    =
    R_{\phi}
    \left(
        \widetilde{\bm{y}}^{(i+1)}
        \mid\bm{x},\mathcal{C}
    \right)
    -
    R_{\phi}
    \left(
        \widetilde{\bm{y}}^{(i)}
        \mid\bm{x},\mathcal{C}
    \right).
\end{equation*}
These credits exactly decompose the reward improvement:
\begin{equation*}
    \sum_{i=1}^{n}\Delta_i
    =
    R_{\phi}(\bm{y}\mid\bm{x},\mathcal{C})
    -
    R_{\phi}(\bm{b}\mid\bm{x},\mathcal{C}).
\end{equation*}
Thus, the method requires only scalar Evaluator outputs, and all
$n+1$ intermediate slates can be scored in one batch. Related ideas
appear in path-based attribution and multi-agent credit assignment
~\cite{sundararajan2017axiomatic,wolpert2001optimal}; here, the path is
designed specifically to preserve slate validity.

Let $a_i$ denote the candidate assigned to position $i$ by the hard
decoder, i.e., $A_{i,a_i}^{\star}=1$. During training, we use the soft
transport plan $\bm{\Gamma}_{\mu}^{\star}$ to pass gradients:
\begin{equation}
    \mathcal{L}_{\mathrm{CA}}
    =
    -
    \mathbb{E}_{\bm{Q}\sim
    p_{\theta}(\cdot\mid\bm{h},\bm{P})}
    \left[
        \sum_{i=1}^{n}
        \operatorname{sg}(\Delta_i)
        \log
        \left(
            \Gamma_{\mu,i,a_i}^{\star}+\epsilon
        \right)
    \right].
    \label{eq:credit_assignment_loss}
\end{equation}
A positive $\Delta_i$ reinforces the selected assignment, whereas a
negative one suppresses it. For warm-starting, let $\bm{B}\in\mathcal{A}_{n,M}$ denote the
assignment matrix of the baseline slate. We use
\begin{equation}
    \mathcal{L}_{\mathrm{Match}}
    =
    -
    \left\langle
        \bm{B},
        \log(\bm{\Gamma}_{\mu}^{\star}+\epsilon)
    \right\rangle.
    \label{eq:matching_loss}
\end{equation}

\subsection{Overall Training Objective}
\label{subsec:overall_loss}

The complete objective is
\begin{equation}
    \mathcal{L}_{\mathrm{Total}}
    =
    \mathcal{L}_{\mathrm{Index}}
    +
    \alpha\mathcal{L}_{\mathrm{Match}}
    +
    \lambda\mathcal{L}_{\mathrm{CA}},
    \label{eq:total_loss}
\end{equation}
where $\alpha$ and $\lambda$ control supervised matching and
Evaluator-guided learning, respectively. We first warm-start the model
with
$\mathcal{L}_{\mathrm{Index}}+\alpha\mathcal{L}_{\mathrm{Match}}$,
and then optimize the full objective.

\section{Theoretical Results}
\label{sec:main_results}

We analyze the assignment geometry of hard matching and soft
transport, the approximation gap induced by entropy regularization, finite-proposal coverage, and online inference complexity. We fix a request $(\bm{x},\mathcal{C})$ and omit it whenever no ambiguity arises.

\subsection{Hard Matching and Soft Transport}
\label{subsec:transport_geometry}

\textbf{DIRECTOR} uses hard matching to construct slates at inference and soft transport to provide differentiable guidance during training. We first
show that hard matching exactly represents the valid space.

\begin{theorem}[Slate-Assignment]
\label{thm:assignment_integrality}
The binary assignment set $\mathcal{A}_{n,M}$ is in bijection with the feasible slate space $\mathcal{S}(\mathcal{C},n)$. Moreover,
\begin{equation*}
    \mathcal{U}_{n,M}
    =
    \operatorname{conv}
    \left(
        \mathcal{A}_{n,M}
    \right).
\end{equation*}
Therefore, for any score matrix
$\bm{S}\in\mathbb{R}^{n\times M}$,
\begin{equation*}
    \max_{\bm{A}\in\mathcal{A}_{n,M}}
    \langle\bm{A},\bm{S}\rangle
    =
    \max_{\bm{\Gamma}\in\mathcal{U}_{n,M}}
    \langle\bm{\Gamma},\bm{S}\rangle,
\end{equation*}
and the relaxed problem admits an integral optimizer $\bm{A}^{\star}\in\mathcal{A}_{n,M}$ corresponding to a valid slate.
\end{theorem}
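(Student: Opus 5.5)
We prove the three claims in order: the bijection, the convex-hull identity, and the equality of the two maxima.

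\emph{Bijection.} Given $\bm{y}\in\mathcal{S}(\mathcal{C},n)$, define $\bm{A}(\bm{y})$ by $A_{ij}=1$ iff $y_i=c_j$. Since each $y_i$ is exactly one candidate, every row sums to $1$. Since the $y_i$ are distinct, every column sum is at most $1$. Hence $\bm{A}(\bm{y})\in\mathcal{A}_{n,M}$. Conversely, a binary matrix with unit row sums has a unique index $a_i$ with $A_{i,a_i}=1$ in each row. Setting $y_i=c_{a_i}$, the column constraint forces $a_i\neq a_{i'}$ for $i\neq i'$, so $\bm{y}$ is duplicate-free. The two maps are mutually inverse, which gives the bijection.

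\emph{Convex hull.} The inclusion $\operatorname{conv}(\mathcal{A}_{n,M})\subseteq\mathcal{U}_{n,M}$ is immediate. Indeed, $\mathcal{A}_{n,M}\subseteq\mathcal{U}_{n,M}$, and $\mathcal{U}_{n,M}$ is convex because it is defined by linear equalities and inequalities.

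For the reverse inclusion, the plan is to reduce to Birkhoff--von Neumann by padding the matrix to a square one. Given $\bm{\Gamma}\in\mathcal{U}_{n,M}$, set $\bm{r}=\bm{1}_M-\bm{\Gamma}^\top\bm{1}_n\geq 0$, whose entries sum to $M-n$. Adjoin $M-n$ dummy rows, each equal to $\bm{r}^\top/(M-n)$ when $M>n$; when $M=n$ there is nothing to add. The resulting $M\times M$ matrix $\widehat{\bm{\Gamma}}$ is doubly stochastic:
\begin{itemize}
\item its original rows sum to $1$ by assumption;
\item each dummy row sums to $\sum_j r_j/(M-n)=1$;
\item column $j$ sums to $(\bm{\Gamma}^\top\bm{1}_n)_j+r_j=1$.
\end{itemize}
By Birkhoff--von Neumann, $\widehat{\bm{\Gamma}}=\sum_k\lambda_k\bm{\Pi}_k$, a convex combination of permutation matrices. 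Restricting each $\bm{\Pi}_k$ to its first $n$ rows yields a binary matrix with unit row sums and column sums at most $1$, that is, an element of $\mathcal{A}_{n,M}$. Restricting the decomposition to those rows therefore expresses $\bm{\Gamma}$ as a convex combination of elements of $\mathcal{A}_{n,M}$.

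An alternative route, if needed, is total unimodularity. The constraint matrix is the incidence matrix of the bipartite graph $K_{n,M}$, so the polytope is integral, and its integral points are exactly $\mathcal{A}_{n,M}$. I expect the padding step to be the only delicate point, specifically checking that $\bm{r}\ge 0$ and treating the edge case $M=n$ separately. Both are routine.

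\emph{Maxima and integral optimizer.} A linear functional $\langle\cdot,\bm{S}\rangle$ over a polytope attains its maximum at a vertex. Since $\mathcal{U}_{n,M}=\operatorname{conv}(\mathcal{A}_{n,M})$ with $\mathcal{A}_{n,M}$ finite, every vertex lies in $\mathcal{A}_{n,M}$. More directly, for $\bm{\Gamma}=\sum_k\lambda_k\bm{A}_k$,
\[
\langle\bm{\Gamma},\bm{S}\rangle=\sum_k\lambda_k\langle\bm{A}_k,\bm{S}\rangle\le\max_k\langle\bm{A}_k,\bm{S}\rangle.
\]
This gives the inequality "$\le$" between the two maxima. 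The reverse inequality "$\ge$" holds because $\mathcal{A}_{n,M}\subseteq\mathcal{U}_{n,M}$. Hence the two maxima are equal, and any maximizing $\bm{A}^\star\in\mathcal{A}_{n,M}$ is an integral optimizer of the relaxation. By the bijection, $\bm{A}^\star$ corresponds to a valid slate.
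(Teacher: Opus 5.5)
Your proof is correct. The bijection is the same as the paper's, but you prove $\mathcal{U}_{n,M}\subseteq\operatorname{conv}(\mathcal{A}_{n,M})$ by a different route. The paper argues polyhedrally, along the lines of your ``alternative route''. The constraint matrix is the incidence matrix of $K_{n,M}$, hence totally unimodular, so every extreme point of $\mathcal{U}_{n,M}$ is integral. The integral points of $\mathcal{U}_{n,M}$ are exactly $\mathcal{A}_{n,M}$, and a bounded polytope is the convex hull of its extreme points. You instead pad $\bm\Gamma$ with $M-n$ copies of $\bm r^\top/(M-n)$ to get a doubly stochastic matrix, apply Birkhoff--von Neumann, and truncate to the first $n$ rows. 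Your checks ($\bm r\ge 0$, $\sum_j r_j=M-n$, and the case $M=n$ forcing $\bm r=\bm 0$) are all correct.

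Your route is constructive. It gives an explicit decomposition of any feasible plan into valid slates, so you can prove the equality of maxima by the elementary averaging inequality instead of the ``linear programs attain their optimum at a vertex'' step the paper uses. It also lets you read any soft plan, including $\bm\Gamma_\mu^\star$, as the position--item marginals of a distribution over duplicate-free slates. The paper's total-unimodularity route is more structural and extends immediately to other integral right-hand sides, such as candidate capacities larger than one. There your padding trick would have to become a transportation-polytope integrality argument. In exchange, it relies on the Hoffman--Kruskal theorem and the vertex representation of polytopes, and yields no explicit decomposition.

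One small point to state explicitly: the equality of maxima, and the existence of $\bm A^\star$, uses $\mathcal{A}_{n,M}\neq\emptyset$. This follows from $1\le n\le M$ through your bijection, since $\mathrm{P}(M,n)\ge 1$.
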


The proof is given in
Appendix~\ref{app:proof_assignment_integrality}.
Theorem~\ref{thm:assignment_integrality} shows that every valid slate
corresponds to exactly one binary assignment, and the continuous
relaxation has no integrality gap. This justifies the global hard
matching used at inference.

During training, \textbf{DIRECTOR} replaces the hard assignment with an entropy-regularized soft transport plan. The next result shows how its capacity constraints couple different positions.

\begin{theorem}[Uniqueness and Capacity-Induced Coupling]
\label{thm:ot_coupling}
For every $\mu>0$, the problem in
Eq.~\eqref{eq:ot} admits a unique optimizer
$\bm{\Gamma}_{\mu}^{\star}$ satisfying
$\Gamma_{\mu,ij}^{\star}>0$ for all $i\in[n]$ and $j\in[M]$.
Moreover, there exist dual variables
$\alpha_i\in\mathbb{R}$ for the row constraints and
$\beta_j\geq0$ for the candidate-capacity constraints such that
\begin{equation*}
    \Gamma_{\mu,ij}^{\star}
    =
    \exp
    \left(
        \frac{S_{ij}-\alpha_i-\beta_j}{\mu}-1
    \right),
    \quad
    \beta_j
    \left(
        1-\sum_{i=1}^{n}\Gamma_{\mu,ij}^{\star}
    \right)
    =
    0.
\end{equation*}

Let
\begin{equation*}
    \overline{\Gamma}_{ij}
    =
    \frac{\exp(S_{ij}/\mu)}
    {\sum_{\ell=1}^{M}\exp(S_{i\ell}/\mu)}
\end{equation*}
denote the optimizer obtained by normalizing each row independently.
If $\overline{\bm{\Gamma}}\in\mathcal{U}_{n,M}$, then
$\bm{\Gamma}_{\mu}^{\star}=\overline{\bm{\Gamma}}$.
Otherwise,
$\bm{\Gamma}_{\mu}^{\star}\neq\overline{\bm{\Gamma}}$ and
$\beta_j>0$ for at least one candidate $j$.
\end{theorem}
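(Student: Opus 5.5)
The plan is to treat Eq.~\eqref{eq:ot} as a strictly concave program over a compact polytope and read off the optimizer's structure from the KKT conditions. First I will settle existence and uniqueness. The set $\mathcal{U}_{n,M}$ is nonempty, since $n\le M$ and the uniform plan $\Gamma_{ij}=1/M$ is feasible. It is also closed and bounded, because every entry lies in $[0,1]$. The objective $\langle\bm{\Gamma},\bm{S}\rangle+\mu H(\bm{\Gamma})$ is continuous on this set, using the convention $0\log 0=0$. It is also strictly concave, because $-t\log t$ is strictly concave on $[0,\infty)$ and $\mu>0$. A maximizer therefore exists and is unique.

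Next I will show strict positivity. Suppose some entry satisfies $\Gamma^{\star}_{ij}=0$. Let $\bm{\Gamma}^{0}$ be a strictly positive feasible plan, such as the uniform one, and consider the perturbation $\bm{\Gamma}_t=(1-t)\bm{\Gamma}^{\star}+t\bm{\Gamma}^{0}$, which stays in the convex set $\mathcal{U}_{n,M}$. As $t\downarrow0$, the one-sided derivative of the entropy term contains $-\log\Gamma_{t,ij}\to+\infty$ for each zero entry, while the linear part and the contributions of the positive entries stay bounded. The objective therefore strictly increases for small $t$, contradicting optimality. Hence every entry of $\bm{\Gamma}^{\star}_{\mu}$ is strictly positive.

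With positivity in hand, the nonnegativity constraints are inactive, and Slater's condition holds because the uniform plan satisfies $\sum_i\Gamma_{ij}=n/M\le1$ with positive entries. KKT conditions therefore hold with multipliers $\alpha_i$ (free) for the row equalities and $\beta_j\ge0$ for the column inequalities. Setting the derivative of the Lagrangian to zero gives $S_{ij}-\mu(\log\Gamma_{ij}+1)-\alpha_i-\beta_j=0$, which rearranges to the stated exponential form, and complementary slackness is exactly $\beta_j(1-\sum_i\Gamma_{ij})=0$.

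For the final claim, note that $\overline{\bm{\Gamma}}$ is the unique maximizer of the same objective over the larger set in which only the row constraints are imposed, because that problem decouples into row-wise softmax problems. If $\overline{\bm{\Gamma}}\in\mathcal{U}_{n,M}$, it is feasible for the smaller problem and optimal over a superset, so by uniqueness $\bm{\Gamma}^{\star}_{\mu}=\overline{\bm{\Gamma}}$. Otherwise $\bm{\Gamma}^{\star}_{\mu}\ne\overline{\bm{\Gamma}}$ trivially, since $\bm{\Gamma}^{\star}_{\mu}$ is feasible and $\overline{\bm{\Gamma}}$ is not.

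To get some $\beta_j>0$, I will argue by contradiction: suppose all $\beta_j=0$. Then the exponential form gives $\Gamma^{\star}_{ij}\propto\exp(S_{ij}/\mu)$ within each row, and the row constraint fixes $e^{\alpha_i/\mu+1}=\sum_\ell\exp(S_{i\ell}/\mu)$, so $\bm{\Gamma}^{\star}_{\mu}=\overline{\bm{\Gamma}}$. That contradicts the previous step.

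I expect the main obstacle to be the positivity step, since the entropy is not differentiable at the boundary. The KKT derivation depends on it, so the one-sided derivative argument has to be stated carefully. The remaining steps are routine convex duality.
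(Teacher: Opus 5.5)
Your proposal is correct and follows essentially the same route as the paper. Both arguments get existence and uniqueness from strict concavity on a compact polytope, positivity by perturbing toward the uniform plan, and the exponential form from KKT stationarity and complementary slackness. Both also prove the coupling claim by comparing with the row-wise softmax optimizer and noting that all $\beta_j=0$ would force $\bm{\Gamma}_{\mu}^{\star}=\overline{\bm{\Gamma}}$.

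The differences are cosmetic. You phrase positivity through the blow-up of the one-sided derivative, where the paper uses the $\Theta(\delta\log(1/\delta))$ entropy gain. You also explicitly justify the existence of KKT multipliers (affine constraints plus a strictly positive feasible point), which the paper leaves implicit.
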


The proof is given in Appendix~\ref{app:proof_ot_coupling}.
The multiplier $\beta_j$ reflects the competition for candidate
$c_j$. If independent predictions satisfy all capacity constraints,
soft transport leaves them unchanged. Otherwise, it jointly adjusts
the conflicting positions. Thus, hard matching guarantees valid
inference, while soft transport provides conflict-aware training
signals.

\subsection{Approximation Behaviors}
\label{subsec:entropic_approximation}
We now bound the gap between the training surrogate defined in Eq.~\eqref{eq:ot} and the hard inference objective $\operatorname*{arg\,max}_{\bm A\in\mathcal A_{n,M}}\langle \bm A,\bm S\rangle$. By Theorem~\ref{thm:assignment_integrality}, the optimal hard-assignment
score can be written as
\[
    V_{\mathrm{hard}}(\bm S)
    :=
    \max_{\bm A\in\mathcal A_{n,M}}
    \langle\bm A,\bm S\rangle
    =
    \max_{\bm\Gamma\in\mathcal U_{n,M}}
    \langle\bm\Gamma,\bm S\rangle.
\]
\begin{theorem}[Surrogate Gap]
\label{thm:entropic_approximation}
For any $\bm S\in\mathbb R^{n\times M}$ and $\mu>0$,
\begin{equation}
    0
    \leq
    V_{\mathrm{hard}}(\bm S)
    -
    \left\langle
        \bm\Gamma_\mu^\star,\bm S
    \right\rangle
    \leq
    \mu n\log M.
    \label{eq:entropic_approximation_bound}
\end{equation}
\end{theorem}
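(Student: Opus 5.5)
The lower bound is immediate from Theorem~\ref{thm:assignment_integrality}. Since $\bm\Gamma_\mu^\star\in\mathcal U_{n,M}$, we have $\langle\bm\Gamma_\mu^\star,\bm S\rangle\le\max_{\bm\Gamma\in\mathcal U_{n,M}}\langle\bm\Gamma,\bm S\rangle=V_{\mathrm{hard}}(\bm S)$. For the upper bound I would use the standard comparison between the regularized and unregularized optima.

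Let $\bm A^\star\in\mathcal A_{n,M}\subseteq\mathcal U_{n,M}$ be an integral maximizer, which exists by Theorem~\ref{thm:assignment_integrality}. Optimality of $\bm\Gamma_\mu^\star$ in Eq.~\eqref{eq:ot}, tested against the feasible point $\bm A^\star$, gives
\begin{equation*}
\langle\bm\Gamma_\mu^\star,\bm S\rangle+\mu H(\bm\Gamma_\mu^\star)\ \ge\ \langle\bm A^\star,\bm S\rangle+\mu H(\bm A^\star).
\end{equation*}
Rearranging yields
\begin{equation*}
V_{\mathrm{hard}}(\bm S)-\langle\bm\Gamma_\mu^\star,\bm S\rangle\le\mu\bigl(H(\bm\Gamma_\mu^\star)-H(\bm A^\star)\bigr).
\end{equation*}
Since $\bm A^\star$ is binary, and using the convention $0\log 0=0$, we get $H(\bm A^\star)=0$. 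So it suffices to show $H(\bm\Gamma)\le n\log M$ for every $\bm\Gamma\in\mathcal U_{n,M}$.

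For this entropy bound I would decompose $H$ by rows. Each row $\bm\Gamma_{i\cdot}$ is a probability vector on $M$ entries, because the row sums are exactly $1$. Hence $-\sum_j\Gamma_{ij}\log\Gamma_{ij}\le\log M$ by the maximum-entropy (Jensen) bound for distributions on $M$ points. Summing over the $n$ rows gives $H(\bm\Gamma)\le n\log M$, which completes the upper bound.

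The only delicate point is the bookkeeping. The entropy must be bounded row by row rather than as a single distribution over $nM$ entries: the latter would only give $n\log(nM)$ after normalizing by the total mass $n$. The convention $0\log0=0$ is needed so that $H(\bm A^\star)=0$. The column-capacity constraint plays no role beyond guaranteeing that $\bm A^\star$ is feasible for Eq.~\eqref{eq:ot}. Existence of $\bm\Gamma_\mu^\star$ is taken from Theorem~\ref{thm:ot_coupling}.
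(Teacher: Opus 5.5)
Your proposal is correct and follows essentially the same argument as the paper. The lower bound comes from feasibility of $\bm\Gamma_\mu^\star$ together with Theorem~\ref{thm:assignment_integrality}, and the upper bound comes from testing the regularized optimality of $\bm\Gamma_\mu^\star$ against the integral maximizer $\bm A^\star$ (with $H(\bm A^\star)=0$) and bounding the entropy row by row by $\log M$.
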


The proof is given in
Appendix~\ref{app:proof_entropic_approximation}.
Theorem~\ref{thm:entropic_approximation} shows that the soft transport
plan used for training loses at most $\mu\log M$ in similarity score
per output position compared with the exact hard matching used at
inference. Hence, $\mu$ controls a direct trade-off: a larger value
provides smoother training signals, whereas a smaller value more
closely matches the inference objective. In particular, the score gap
vanishes as $\mu\to0$.

\subsection{Finite-Proposal Coverage}
\label{subsec:exploration_capacity}

Many \textbf{AR} rerankers generate a slate with beam decoding, under which a complete slate can be reached only if each of its prefixes remains in the beam. Once a prefix is pruned, all slates extending that prefix become unreachable in the current decoding run. \textbf{NAR} methods like \textbf{DIRECTOR} avoids this prefix-truncation mechanism. 

Let $\mathsf D(\bm Q,\bm E)$ denote the hard-assignment decoder in
Eq.~\eqref{eq:hard_assignment}, with a fixed tie-breaking rule. For any
target set
$\mathcal T\subseteq\mathcal S(\mathcal C,n)$, define its
single-sample probability as
\[
    p_{\mathcal T}
    =
    \mathbb P_{\bm Q\sim p_\theta(\cdot\mid\bm h,\bm P)}
    \left[
        \mathsf D(\bm Q,\bm E)\in\mathcal T
    \right].
\]

\begin{proposition}[Finite-Proposal Coverage]
\label{prop:multi_sample_discovery}
For $K$ independently sampled proposals,
\begin{equation}
    \mathbb P
    \left[
        \exists k\in[K]:
        \bm y^{(k)}\in\mathcal T
    \right]
    =
    1-(1-p_{\mathcal T})^K.
    \label{eq:exact_discovery_probability}
\end{equation}
\end{proposition}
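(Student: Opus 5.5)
The plan is to pass to the complementary event and use independence of the $K$ proposals. Let $\bm y^{(k)}=\mathsf D(\bm Q^{(k)},\bm E)$, where $\bm Q^{(1)},\ldots,\bm Q^{(K)}$ are drawn i.i.d.\ from $p_\theta(\cdot\mid\bm h,\bm P)$ as in Section~\ref{subsec:parallel_retrieval}. Since $\bm E$ is fixed for the request and $\mathsf D$ is deterministic (the tie-breaking rule is fixed), each $\bm y^{(k)}$ is a fixed measurable function of $\bm Q^{(k)}$ alone. Functions of independent random variables are independent, so the proposals $\bm y^{(1)},\ldots,\bm y^{(K)}$ are i.i.d.\ random slates. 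By Theorem~\ref{thm:assignment_integrality}, each lies in $\mathcal S(\mathcal C,n)$, so the event $\{\bm y^{(k)}\in\mathcal T\}$ is well defined. Its probability is exactly $p_{\mathcal T}$ by definition.

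Next, consider the event that no proposal lands in $\mathcal T$:
\[
\Bigl\{\nexists k\in[K]:\bm y^{(k)}\in\mathcal T\Bigr\}=\bigcap_{k=1}^K\{\bm y^{(k)}\notin\mathcal T\}.
\]
By independence, its probability factorizes as
\[
\prod_{k=1}^K\mathbb P[\bm y^{(k)}\notin\mathcal T]=(1-p_{\mathcal T})^K.
\]
Taking the complement gives Eq.~\eqref{eq:exact_discovery_probability}.

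The only point needing care is measurability of $\mathsf D$, which makes $\{\mathsf D(\bm Q,\bm E)\in\mathcal T\}$ an event. The output space $\mathcal S(\mathcal C,n)$ is finite, and the decoder selects the maximizer of finitely many linear functionals $\langle\bm A,\bm Q\bm E^\top/\tau\rangle$ over $\bm A\in\mathcal A_{n,M}$, with a fixed tie-break. So each preimage $\mathsf D^{-1}(\bm y)$ is a finite Boolean combination of closed and open half-spaces in $\bm Q$, and is therefore Borel. With that settled, the rest is routine.

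Optionally, I would add the consequence $1-(1-p_{\mathcal T})^K\ge 1-e^{-Kp_{\mathcal T}}$, which uses $1-p\le e^{-p}$. It implies that any target set with positive single-sample mass is covered with probability tending to one as $K$ grows. This contrasts with beam pruning, where such a set can have probability zero.
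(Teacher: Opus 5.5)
Your proposal is correct and follows the same route as the paper: the complementary event ``no proposal lands in $\mathcal T$'' has probability $(1-p_{\mathcal T})^K$ by independence, and you take the complement. Your additional remarks, that the $\bm y^{(k)}$ inherit independence as deterministic functions of the i.i.d.\ $\bm Q^{(k)}$ and that $\{\mathsf D(\bm Q,\bm E)\in\mathcal T\}$ is measurable, make explicit details the paper's proof leaves implicit.
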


The proof is given in
Appendix~\ref{app:proof_multi_sample_discovery}.
Thus, any target region with positive single-sample probability is reached with increasing probability as $K$ grows. Unlike beam search, this increase does not depend on the survival of intermediate prefixes.

Appendix~\ref{app:local_decode_stability} provides a sufficient
condition for $p_{\mathcal T}>0$. If a slate is the unique hard-matching
solution at some index matrix with a positive assignment margin, it
remains the decoder output in a neighborhood of that matrix. Assigning
positive probability to this neighborhood therefore gives a positive generation probability.

\subsection{Online Inference Complexity}

Online latency is critical for industrial deployment. We compare the generator-side cost of producing $K$ slate proposals, using beam width $B=K$ for the AR Generator. All methods use an $L$-layer Transformer with hidden dimension $d_g$.

\begin{table}[t]
\centering
\caption{Online complexity for generating $K$ proposals with
Transformer backbones, where the AR beam width is $B=K$.}
\label{tab:inference_complexity}
\setlength{\tabcolsep}{3.2pt}
\renewcommand{\arraystretch}{1.10}
\resizebox{\columnwidth}{!}{
\begin{tabular}{lccc}
\toprule
 & AR & DIRECTOR-CVAE & DIRECTOR-DIFF \\
\midrule
Transformer calls
& $n$ sequential
& $1$ batched
& $T_{\mathrm{diff}}$ batched
\\

Proposal generation
&
$\mathcal O\!\left(
KL(n^2d_g^2+n^3d_g)
\right)$
&
$\mathcal O\!\left(
KL(nd_g^2+n^2d_g)
\right)$
&
$\mathcal O\!\left(
KT_{\mathrm{diff}}L(nd_g^2+n^2d_g)
\right)$
\\

Candidate scoring
&
$\mathcal O(KnMd)$
&
$\mathcal O(KnMd)$
&
$\mathcal O(KnMd)$
\\

Slate construction
&
$\mathcal O\!\left(
nKM\log(KM)
\right)$
&
$\mathcal O(Kn^2M)$
&
$\mathcal O(Kn^2M)$
\\
\bottomrule
\end{tabular}
}
\end{table}

At decoding step $t$, AR evaluates the $K$ surviving prefixes and scores their expansions over $M$ candidates. Repeatedly processing the growing prefixes gives $\mathcal O(KL(n^2d_g^2+n^3d_g))$ Transformer computation, while beam expansion and pruning require $\mathcal O(nKM\log(KM))$ operations. \textbf{DIRECTOR} instead generates the $K$ complete index matrices in a batch and solves $K$ independent rectangular problems.

The discrete decoding terms are lower-order than similarity scoring
in typical reranking settings. Their ratios to
$\mathcal O(KnMd)$ are
$\mathcal O(\log(KM)/d)$ for AR beam pruning and
$\mathcal O(n/d)$ for \textbf{DIRECTOR} hard matching. Since the target slate
is short and $n\ll d$, global matching introduces limited overhead.
Even with KV caching, AR still requires $n$ prefix-dependent
Transformer calls and repeated beam updates, whereas \textbf{DIRECTOR} jointly
updates all positions. Avoiding this item-by-item dependency chain is
a core serving advantage of \textbf{DIRECTOR}-style \textbf{NAR} generation.
\begin{table*}[t]
    \centering
    \caption{Statistics of the datasets used in the experiments.}
    \label{tab:dataset_stats}
    \begin{tabular}{lccccc}
        \toprule
        Dataset & Domain & \# Requests & \# Items & Candidate Pool Size & Target Length \\
        \midrule
        ML-1M & RecSys & 161,646 & 3,043 & 50 & 6 \\
        Amazon-Books & RecSys & 309,917 & 38,121 & 50 & 6 \\
        RecFlow & RecSys & 3,308,233 & 14,181,768 & 120 & 6 \\
        \bottomrule
    \end{tabular}
\end{table*}

\begin{table*}[t]
\centering
\caption{Performance comparison on three datasets.
N@6, P@6, and R@6 denote NDCG@6, Precision@6, and Recall@6, respectively.
The best result is highlighted in bold, and the strongest non-DIRECTOR baseline is underlined.
Improvements are calculated relative to the strongest baseline for each metric.}
\label{tab:main_results}
\setlength{\tabcolsep}{3.2pt}
\resizebox{\textwidth}{!}{
\begin{tabular}{ll*{3}{cccc}}
\toprule
\multirow{2}{*}{Category}
& \multirow{2}{*}{Model}
& \multicolumn{4}{c}{ML-1M}
& \multicolumn{4}{c}{Amazon-Books}
& \multicolumn{4}{c}{RecFlow} \\
\cmidrule(lr){3-6}
\cmidrule(lr){7-10}
\cmidrule(lr){11-14}
& & N@6 & P@6 & R@6 & F1@6
& N@6 & P@6 & R@6 & F1@6
& N@6 & P@6 & R@6 & F1@6 \\
\midrule

\multirow{6}{*}{\shortstack{Generator-\\Only}}
& DNN
& 0.5950 & 0.4539 & 0.5542 & 0.4876
& 0.6448 & 0.5072 & 0.6125 & 0.5472
& 0.1584 & 0.0793 & 0.2069 & 0.1084 \\

& DCN
& 0.5981 & 0.4561 & 0.5573 & 0.4901
& 0.6683 & 0.5298 & 0.6461 & 0.5701
& 0.1597 & 0.0795 & 0.2083 & 0.1088 \\

& Seq2Slate
& 0.6222 & 0.4867 & 0.5927 & 0.5225
& 0.6952 & 0.5654 & 0.6871 & 0.6078
& 0.1693 & 0.0821 & 0.2134 & 0.1130 \\

& DLCM
& 0.6061 & 0.4643 & 0.5667 & 0.4988
& 0.6597 & 0.5242 & 0.6396 & 0.5641
& 0.1747 & 0.0861 & 0.2240 & 0.1169 \\

& SetRank
& 0.7154 & 0.5720 & 0.6933 & 0.6132
& 0.8014 & 0.6635 & 0.8145 & 0.7156
& 0.1823 & 0.0896 & 0.2344 & 0.1225 \\

& PRM
& 0.7081 & 0.5639 & 0.6843 & 0.6049
& 0.7992 & 0.6603 & 0.8107 & 0.7122
& 0.1840 & 0.0905 & 0.2368 & 0.1238 \\

\midrule

\multirow{7}{*}{\shortstack{Generator-\\ Evaluator}}
& PIER
& 0.7146 & 0.5715 & 0.6929 & 0.6128
& 0.7987 & 0.6613 & 0.8118 & 0.7130
& \underline{0.1910} & \underline{0.0935}
& \underline{0.2431} & \underline{0.1277} \\

& NAR4Rec
& 0.7348 & 0.5912 & 0.7162 & 0.6338
& 0.8188 & 0.6807 & 0.8365 & 0.7341
& 0.1792 & 0.0880 & 0.2297 & 0.1203 \\

& JDRec
& \underline{0.7399} & \underline{0.5972}
& \underline{0.7233} & \underline{0.6402}
& \underline{0.8255} & \underline{0.6832}
& \underline{0.8409} & \underline{0.7374}
& 0.1832 & 0.0898 & 0.2345 & 0.1227 \\

& OMGRec
& 0.7319 & 0.5886 & 0.7131 & 0.6310
& 0.8040 & 0.6642 & 0.8145 & 0.7160
& 0.1866 & 0.0913 & 0.2385 & 0.1247 \\

& \textbf{DIRECTOR-CVAE}
& \textbf{0.7672} & \textbf{0.6214}
& \textbf{0.7508} & \textbf{0.6641}
& 0.8478 & \textbf{0.7075}
& \textbf{0.8712} & \textbf{0.7588}
& \textbf{0.1979} & \textbf{0.0957}
& 0.2477 & \textbf{0.1305} \\

& \textbf{DIRECTOR-DIFF}
& 0.7659 & 0.6200
& 0.7492 & 0.6628
& \textbf{0.8486} & 0.7069
& 0.8708 & 0.7585
& 0.1976 & 0.0956
& \textbf{0.2480} & 0.1304 \\

& \textbf{Improv.}
& \textbf{+3.69\%} & \textbf{+4.05\%}
& \textbf{+3.80\%} & \textbf{+3.73\%}
& \textbf{+2.80\%} & \textbf{+3.56\%}
& \textbf{+3.60\%} & \textbf{+2.90\%}
& \textbf{+3.61\%} & \textbf{+2.35\%}
& \textbf{+2.02\%} & \textbf{+2.19\%} \\

\bottomrule
\end{tabular}
}
\end{table*}
\section{Experiments}
\label{sec:experiments}

We evaluate \textbf{DIRECTOR} through comprehensive offline experiments and a large-scale online deployment. The offline evaluation is conducted on two public recommendation benchmarks and one industrial full-flow dataset, covering both conventional user-item interaction data and request-level logs collected from a real multi-stage recommendation system. We further deploy \textbf{DIRECTOR} in \textbf{Kuaishou}'s production short-video recommendation platform to examine its effectiveness and efficiency under strict serving constraints. Our experiments are designed to answer the following research questions:

\textbf{RQ1:} How does \textbf{DIRECTOR} compare with reranking baselines?

\textbf{RQ2:} How do transport-guided learning and reward-guided credit assignment contribute to the overall performance?

\textbf{RQ3:} Does it improve multi-objective recommendation when deployed in a large-scale, latency-sensitive production system?

\textbf{RQ4:} Can it reduce online serving resources under matched
throughput, latency, and availability constraints?

\subsection{Experimental Setup}
\label{sec:experiments_setup}

\subsubsection{Datasets}
We conduct offline experiments on two public recommendation datasets, \textbf{ML-1M}~\cite{harper2016movielens} and \textbf{Amazon-Books}~\cite{he2016ups}, and one industrial multi-stage dataset, \textbf{RecFlow}~\cite{liu2024recflow}. For the public datasets, we train \textbf{BPR-MF}~\citep{rendle2009bpr} to simulate upstream retrieval and construct request-specific candidate pools of 50 items. For each user, the six most recent interactions are reserved for testing, and the target slate length is fixed to six. \textbf{RecFlow} directly provides request-level candidates and multi-stage features; we retain 120 upstream candidates per request and generate a slate of six items. Further details are provided in Table~\ref{tab:dataset_stats} and Appendix~\ref{appdix:datasets}.

We further deploy \textbf{DIRECTOR} in the single-column feed of Kuaishou's main application. \textbf{DIRECTOR} and the production baseline are assigned to two disjoint 10\% traffic buckets and evaluated over seven consecutive days. Both groups share the same upstream candidates, input features, Evaluator, and serving objectives.

\subsubsection{Baselines}
We compare \textbf{DIRECTOR} with representative point-wise rankers (\textbf{DNN} and \textbf{DCN}), context-aware rerankers (\textbf{DLCM}~\cite{ai2018learning}, \textbf{PRM}~\cite{pei2019personalized}, and \textbf{SetRank}~\cite{pang2020setrank}), an \textbf{AR} generator (\textbf{Seq2Slate}~\cite{bello2018seq2slate}), and Generator--Evaluator methods (\textbf{PIER}~\cite{shi2023pier}, \textbf{NAR4Rec}~\cite{ren2024nar4rec}, \textbf{JDRec}~\cite{zhao2024jdrec}, and \textbf{OMGRec}~\cite{xu2026omgrec}). We evaluate both \textbf{DIRECTOR-CVAE} and \textbf{DIRECTOR-DIFF}, which instantiate the dynamic-index generator with CVAE and conditional diffusion, respectively. All Generator-Evaluator methods use the same candidate pools, proposal budget, and fixed Evaluator. Detailed baseline descriptions are deferred to Appendix~\ref{appdix:baselines}.

\subsubsection{Metrics.}
We report \textbf{NDCG@6}, \textbf{Precision@6}, \textbf{Recall@6}, and \textbf{F1@6} for recommendation effectiveness. For the online A/B test, we report metrics related to user feedback and serving efficiency. Definitions are given in Appendix~\ref{appdix:metrics}.

\subsubsection{Implementation.}
All methods are optimized with Adam~\cite{kingma2015adam}. Unless otherwise stated, the embedding dimension is 64, the learning rate is $10^{-3}$, and the batch size is 2,048. \textbf{DIRECTOR} is first warm-started with the index-generation and matching objectives and then optimized using the complete reward-guided objective. More reproducibility details are provided in Appendix~\ref{appdix:implementation}.


\begin{table}[t]
\centering
\caption{Online A/B testing and serving stress-test results on Kuaishou APP. 
The online VV improvement is statistically significant ($p<0.05$). 
The efficiency comparison is conducted under the same model QPS and end-to-end latency constraints, while maintaining 99\% service availability.}
\label{tab:online_ab}
\begin{tabular}{llc}
\toprule
\textbf{Category} & \textbf{Metric} & \textbf{Relative Change}\\
\midrule
Effectiveness
 & Valid View & $+0.519\%$ $CI: [0.45\%, 0.59\%]$\\
& Comment & $+0.695\%$ $CI: [0.56\%, 0.83\%]$\\
& Like & $+0.330\%$ $CI: [0.17\%, 0.48\%]$\\
\midrule
Efficiency
 & CPU Consumpt. & $-66.7\%$\\
\bottomrule
\end{tabular}
\end{table}

\subsection{Performance on Recommendation Tasks}
\label{sec:experiments_recommendation}

\textbf{Offline Evaluation (RQ1).}
Table~\ref{tab:main_results} reports the offline performance on
ML-1M, Amazon-Books, and RecFlow. Both \textbf{DIRECTOR} variants consistently outperform all Generator-only and Generator--Evaluator baselines across the three datasets. Compared with the strongest non-DIRECTOR method, the best \textbf{DIRECTOR} variant improves NDCG@6 by 3.69\%, 2.80\%, and 3.61\% on ML-1M, Amazon-Books, and RecFlow, respectively. Consistent improvements are also observed for Precision@6, Recall@6, and F1@6, indicating that \textbf{DIRECTOR} improves both the retrieval of relevant candidates and their ordering within the output slate.

The comparable performance of the two variants shows that \textbf{DIRECTOR} is not tied to
a specific latent generator: both one-pass CVAE generation and
iterative diffusion can produce effective dynamic index matrices.
Meanwhile, the strong performance of DIRECTOR-CVAE suggests that high-quality proposals can already be generated with a single
parallel forward pass, which is attractive for latency-sensitive
deployment. Component-level ablation studies are presented in
Appendix~\ref{appdix:ablation}.

\textbf{Online A/B Test (RQ3 \& RQ4).} 
We further evaluate \textbf{DIRECTOR} through a large-scale online A/B test on Kuaishou's short-video recommendation platform (RQ3), which serves hundreds of millions of users. We hold out 10\% of the online traffic as the control group and assign another 10\% to the treatment group, with the experiment running for seven consecutive days. The control group uses the production \textbf{AR} Generator-Evaluator reranking system, while the treatment group replaces its generator with \textbf{DIRECTOR} and keeps the evaluator and the remaining serving components unchanged. The results show that \textbf{DIRECTOR} achieves a statistically significant relative lift of \textbf{+0.519\%} in VV ($p<0.05$). This improvement demonstrates that the effectiveness of \textbf{DIRECTOR} can be translated into real-world business gains.

We further conduct large-scale stress tests to evaluate the serving efficiency (RQ4) of \textbf{DIRECTOR}. Specifically, we compare \textbf{DIRECTOR} with an NTP-based \textbf{AR} online baseline generator equipped with beam search. To ensure a fair comparison, both methods are evaluated under an identical serving configuration, with the peak throughput controlled at approximately 20,000 QPS and the P99 end-to-end latency constrained to no more than 30\,ms. Meanwhile, both systems are required to maintain 99\% service availability throughout the stress test. As shown in Table~\ref{tab:online_ab}, under strict serving conditions, \textbf{DIRECTOR} reduces machine consumption by \textbf{66.7\%} compared with NTP + Beam Search. This substantial resource saving mainly benefits from \textbf{DIRECTOR}'s parallel generation architecture, which eliminates the sequential decoding dependency inherent in \textbf{AR} beam search while preserving globally coordinated slate construction through direct global hard matching. These results demonstrate that \textbf{DIRECTOR} can achieve comparable peak throughput, latency, and service reliability with significantly fewer computational resources, thereby improving both online recommendation effectiveness and industrial serving efficiency.
\section{Conclusion}
\label{sec:conclusion}

In this work, we propose \textbf{DIRECTOR}, a globally coordinated
non-\textbf{AR} reranking framework for the Generator--Evaluator
paradigm. \textbf{DIRECTOR} jointly generates request-conditioned
dynamic indices for all target positions, uses capacity-constrained
soft transport for conflict-aware training, and performs direct hard
matching for complete and duplicate-free inference without
\textbf{AR} rollouts. We further introduce prefix-anchored credit
assignment to derive position-specific optimization signals from
opaque list-wise rewards.

Our theoretical analysis characterizes the correspondence between
valid slates and assignments, the capacity-induced coupling and
approximation gap of soft transport, and finite-proposal coverage.
Experiments on three offline datasets and Kuaishou's production system demonstrate consistent
improvements in reranking effectiveness and serving efficiency.
These results establish globally coordinated parallel generation as
an effective and deployment-friendly alternative to \textbf{AR}
slate generation in latency-sensitive recommendation systems.

\bibliographystyle{ACM-Reference-Format}
\bibliography{sample-base}

\newpage
\appendix
\appendix

\section{Detailed Proofs of Theoretical Results}
\label{app:theoretical_proofs}

\subsection{Proof of Theorem~\ref{thm:assignment_integrality}}
\label{app:proof_assignment_integrality}

Recall
\[
\mathcal{A}_{n,M}
=
\left\{
\bm A\in\{0,1\}^{n\times M}
\,\middle|\,
\bm A\bm 1_M=\bm 1_n,\;
\bm A^\top\bm 1_n\leq\bm 1_M
\right\},
\]
and let $\mathcal{U}_{n,M}$ be its relaxation obtained by replacing
$\bm A\in\{0,1\}^{n\times M}$ with
$\bm\Gamma\in\mathbb{R}_+^{n\times M}$.
For any slate
$\bm y=(y_1,\ldots,y_n)\in\mathcal{S}(\mathcal C,n)$, define
$A^{\bm y}_{ij}=\mathbb I[y_i=c_j]$. Each row contains exactly one
nonzero entry, and the distinctness of the items in $\bm y$ ensures
that each column contains at most one. Hence
$\bm A^{\bm y}\in\mathcal A_{n,M}$. Conversely, for any $\bm A\in\mathcal A_{n,M}$, each row has a
unique nonzero entry. Let $j_i$ be its column index and set
$y_i=c_{j_i}$. The column constraints imply that the $j_i$'s are
distinct, so $\bm y\in\mathcal S(\mathcal C,n)$. The two mappings are
inverse to each other, establishing the bijection.

Identify each variable $\Gamma_{ij}$ with an edge between position $i$ and candidate $j$ in the complete bipartite graph $K_{n,M}$. The row- and column-sum constraints are described by its node--edge incidence matrix. After changing the signs of the candidate-node rows, this becomes a directed node--arc incidence matrix and is therefore totally unimodular. Replacing equalities by pairs of inequalities and adding nonnegativity constraints preserve total unimodularity. Since the right-hand side is integral, every extreme point of $\mathcal U_{n,M}$ is integral.

Any integral $\bm\Gamma\in\mathcal U_{n,M}$ has nonnegative integer entries and unit row sums; hence every row contains exactly one entry equal to one. The column constraints ensure that no column is selected more than once, so $\bm\Gamma\in\mathcal A_{n,M}$. Therefore,
\[
\operatorname{ext}(\mathcal U_{n,M})
\subseteq
\mathcal A_{n,M}.
\]
Because $\mathcal U_{n,M}$ is a bounded polytope,
\[
\mathcal U_{n,M}
=
\operatorname{conv}\bigl(\operatorname{ext}(\mathcal U_{n,M})\bigr)
\subseteq
\operatorname{conv}(\mathcal A_{n,M}).
\]
The reverse inclusion follows from
$\mathcal A_{n,M}\subseteq\mathcal U_{n,M}$ and the convexity of
$\mathcal U_{n,M}$. Thus,
\[
\mathcal U_{n,M}
=
\operatorname{conv}(\mathcal A_{n,M}).
\]

For $1\leq n\leq M$, $\mathcal U_{n,M}$ is nonempty and compact.
A linear objective therefore attains an optimum at an extreme point
$\bm\Gamma^\star$ of $\mathcal U_{n,M}$. By the preceding result,
$\bm\Gamma^\star\in\mathcal A_{n,M}$. Together with
$\mathcal A_{n,M}\subseteq\mathcal U_{n,M}$, 
\[
\max_{\bm A\in\mathcal A_{n,M}}
\langle\bm A,\bm S\rangle
=
\max_{\bm\Gamma\in\mathcal U_{n,M}}
\langle\bm\Gamma,\bm S\rangle.
\]
Hence the relaxed problem admits an integral optimizer corresponding,
through the above bijection, to a valid duplicate-free slate.
\hfill$\square$

\subsection{Proof of Theorem~\ref{thm:ot_coupling}}
\label{app:proof_ot_coupling}

\subsubsection{Existence and uniqueness.}
Let $F(\bm{\Gamma})=\langle\bm{\Gamma},\bm{S}\rangle+\mu H(\bm{\Gamma})$. The uniform plan $\Gamma^0_{ij}=1/M$ belongs to
$\mathcal U_{n,M}$ because its row sums equal one and its column sums equal $n/M\leq1$. Hence, $\mathcal U_{n,M}$ is nonempty. It is also compact, so the continuity of $F$ guarantees the existence
of an optimizer. Since $-u\log u$ is strictly concave on $\mathbb R_+$,
$H(\bm\Gamma)$ is strictly concave. Therefore, $F$ is strictly
concave on the convex set $\mathcal U_{n,M}$, and its optimizer
$\bm\Gamma_\mu^\star$ is unique.

\subsubsection{Positivity.}
Suppose that $\Gamma_{\mu,ij}^\star=0$ for some $(i,j)$ and define
\[
\bm\Gamma_\delta
=
(1-\delta)\bm\Gamma_\mu^\star+\delta\bm\Gamma^0,
\quad \delta\in(0,1).
\]
This plan remains feasible. The change in the linear objective and
the entropy contributions of positive entries are $O(\delta)$,
whereas each zero entry gains
\[
-\frac{\delta}{M}\log\frac{\delta}{M}
=
\frac{\delta}{M}\log\frac{M}{\delta}
=
\Theta\!\left(\delta\log\frac{1}{\delta}\right).
\]
For sufficiently small $\delta$, this positive entropy gain dominates,
giving $F(\bm\Gamma_\delta)>F(\bm\Gamma_\mu^\star)$, a contradiction.
Thus, $\Gamma_{\mu,ij}^\star>0$ for all $i,j$.

\subsubsection{KKT characterization.}
Introduce multipliers $\alpha_i\in\mathbb R$ for the row equalities
and $\beta_j\geq0$ for the column-capacity constraints. Since
$\bm\Gamma_\mu^\star$ is strictly positive, the nonnegativity
constraints are inactive. The KKT stationarity and complementary
slackness conditions give
\[
S_{ij}
-\mu\bigl(\log\Gamma_{\mu,ij}^\star+1\bigr)
-\alpha_i-\beta_j
=0,
\quad
\beta_j
\left(
1-\sum_{i=1}^{n}\Gamma_{\mu,ij}^\star
\right)
=0.
\]
Therefore,
\[
\Gamma_{\mu,ij}^\star
=
\exp\left(
\frac{S_{ij}-\alpha_i-\beta_j}{\mu}-1
\right).
\]

\subsubsection{Comparison with independent normalization.}
Without the column-capacity constraints, the problem separates across
rows, and its unique optimizer is
\[
\overline{\Gamma}_{ij}
=
\frac{\exp(S_{ij}/\mu)}
{\sum_{\ell=1}^{M}\exp(S_{i\ell}/\mu)}.
\]
If $\overline{\bm\Gamma}\in\mathcal U_{n,M}$, it is feasible for the
constrained problem and already maximizes $F$ over the larger
row-constrained set. Uniqueness therefore gives
$\bm\Gamma_\mu^\star=\overline{\bm\Gamma}$.

If $\overline{\bm\Gamma}\notin\mathcal U_{n,M}$, it cannot equal the
constrained optimizer. Moreover, if all $\beta_j$ were zero, the KKT
conditions would reduce to independent row normalization, implying
$\bm\Gamma_\mu^\star=\overline{\bm\Gamma}$, which is impossible.
Hence, $\beta_j>0$ for at least one candidate $j$. Since the same
$\beta_j$ appears in the optimality condition of every position, it
couples their assignments through the shared candidate capacity.
\hfill$\square$

\subsection{Proof of Theorem~\ref{thm:entropic_approximation}}
\label{app:proof_entropic_approximation}

Let
\[
    \bm A^\star
    \in
    \operatorname*{arg\,max}_{\bm A\in\mathcal A_{n,M}}
    \langle \bm A,\bm S\rangle,
\]
so that
$V_{\mathrm{hard}}(\bm S)=\langle\bm A^\star,\bm S\rangle$.
By Theorem~\ref{thm:assignment_integrality},
$\bm A^\star\in\mathcal U_{n,M}$. Moreover, each row of
$\bm A^\star$ is one-hot, and hence $H(\bm A^\star)=0$.

Since $\bm\Gamma_\mu^\star$ maximizes the entropy-regularized
objective over $\mathcal U_{n,M}$,
\[
    \langle\bm\Gamma_\mu^\star,\bm S\rangle
    +\mu H(\bm\Gamma_\mu^\star)
    \geq
    \langle\bm A^\star,\bm S\rangle
    +\mu H(\bm A^\star) =V_{\mathrm{hard}}(\bm S).
\]
Therefore,
\[
    V_{\mathrm{hard}}(\bm S)
    -
    \langle\bm\Gamma_\mu^\star,\bm S\rangle
    \leq
    \mu H(\bm\Gamma_\mu^\star).
\]

Each row of $\bm\Gamma_\mu^\star$ is a probability vector over
$M$ candidates, so its entropy is at most $\log M$. Summing over
the $n$ rows gives
\[
    H(\bm\Gamma_\mu^\star)\leq n\log M.
\]
Hence,
\[
    V_{\mathrm{hard}}(\bm S)
    -
    \langle\bm\Gamma_\mu^\star,\bm S\rangle
    \leq
    \mu n\log M.
\]

Finally, since $\bm\Gamma_\mu^\star\in\mathcal U_{n,M}$ and
$V_{\mathrm{hard}}(\bm S)$ is also the maximum linear score over
$\mathcal U_{n,M}$ by
Theorem~\ref{thm:assignment_integrality},
\[
    \langle\bm\Gamma_\mu^\star,\bm S\rangle
    \leq
    V_{\mathrm{hard}}(\bm S).
\]
This proves both bounds.
\hfill$\square$

\subsection{Local Stability of Hard Matching}
\label{app:local_decode_stability}

\begin{proposition}[Local Decoding Stability]
\label{prop:local_decode_stability}
Suppose $\bm y$ is the unique output of the hard decoder at
$\bm Q_0$, with a positive assignment margin over all competing
slates. Then there exists an open neighborhood
$\mathcal N(\bm Q_0)$ such that
\[
    \mathsf D(\bm Q,\bm E)=\bm y,
    \quad
    \forall\,\bm Q\in\mathcal N(\bm Q_0).
\]
If
$p_\theta(\mathcal N(\bm Q_0)\mid\bm h,\bm P)>0$,
then DIRECTOR generates $\bm y$ with positive probability.
\end{proposition}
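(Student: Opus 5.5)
The plan is to exploit the finiteness of $\mathcal A_{n,M}$ together with the continuity of each assignment score in $\bm Q$. For each $\bm A\in\mathcal A_{n,M}$ define
\[
    \phi_{\bm A}(\bm Q)=\langle \bm A,\bm Q\bm E^\top/\tau\rangle=\frac{1}{\tau}\sum_{i=1}^n \bm q_i^\top \bm e_{j_i(\bm A)},
\]
where $j_i(\bm A)$ is the column selected in row $i$. Each $\phi_{\bm A}$ is linear, hence Lipschitz, in $\bm Q$ under the Frobenius norm. Specifically, $|\phi_{\bm A}(\bm Q)-\phi_{\bm A}(\bm Q_0)|\le \frac{1}{\tau}\sqrt{n}\,e_{\max}\|\bm Q-\bm Q_0\|_F$ with $e_{\max}=\max_j\|\bm e_j\|_2$, by Cauchy--Schwarz row by row. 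By the bijection in Theorem~\ref{thm:assignment_integrality}, let $\bm A^{\bm y}$ be the assignment of $\bm y$. Interpret the hypothesis as a strictly positive margin
\[
    \gamma:=\phi_{\bm A^{\bm y}}(\bm Q_0)-\max_{\bm A\neq \bm A^{\bm y}}\phi_{\bm A}(\bm Q_0)>0.
\]
This maximum is over a finite set of $\mathrm P(M,n)-1$ elements, so it is attained and $\gamma$ is well defined. If $\mathrm P(M,n)=1$ the claim is trivial.

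Next, take $\mathcal N(\bm Q_0)$ to be the open Frobenius ball of radius $r=\gamma\tau/(2\sqrt{n}\,e_{\max})$; if $e_{\max}=0$ all scores vanish and the margin cannot be positive, so this case is excluded. For $\bm Q$ in this ball, every score moves by less than $\gamma/2$. Hence
\[
    \phi_{\bm A^{\bm y}}(\bm Q)>\phi_{\bm A^{\bm y}}(\bm Q_0)-\gamma/2\ge \phi_{\bm A}(\bm Q_0)+\gamma/2>\phi_{\bm A}(\bm Q)
\]
for every competing $\bm A$. So $\bm A^{\bm y}$ is the strict, unique maximizer of Eq.~\eqref{eq:hard_assignment} at $\bm Q$. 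Because it is unique, the fixed tie-breaking rule never intervenes, and $\mathsf D(\bm Q,\bm E)=\bm y$ on the whole neighborhood.

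For the probabilistic claim, note that $\{\bm Q:\mathsf D(\bm Q,\bm E)=\bm y\}\supseteq\mathcal N(\bm Q_0)$. The neighborhood is open and hence measurable. Monotonicity of probability then gives $p_{\{\bm y\}}\ge p_\theta(\mathcal N(\bm Q_0)\mid\bm h,\bm P)>0$. Combined with Proposition~\ref{prop:multi_sample_discovery}, $\bm y$ appears among $K$ proposals with probability at least $1-(1-p_\theta(\mathcal N(\bm Q_0)))^K$.

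The main obstacle is the first step: the phrase ``positive assignment margin'' has to be made precise, and the uniform bound has to hold across all competitors at once. Finiteness of $\mathcal A_{n,M}$ and the single Lipschitz constant shared by all $\phi_{\bm A}$ settle this. The remaining steps are routine.
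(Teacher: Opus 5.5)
Your proof is correct and follows the same route as the paper: each score $\langle \bm A, \bm Q\bm E^\top/\tau\rangle$ is continuous in $\bm Q$ and $\mathcal A_{n,M}$ is finite, so the strict gap at $\bm Q_0$ persists on a neighborhood, where $\bm A^{\bm y}$ remains the unique maximizer; monotonicity of probability then gives the second claim. The only difference is that you make the neighborhood explicit through the shared Lipschitz constant $\sqrt{n}\,e_{\max}/\tau$, giving the radius $\gamma\tau/(2\sqrt{n}\,e_{\max})$, whereas the paper only asserts that such a neighborhood exists.
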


\begin{proof}
For each feasible assignment
$\bm A\in\mathcal A_{n,M}$, define its hard-matching score as
\[
    g_{\bm A}(\bm Q)
    =
    \langle\bm A,\bm S(\bm Q)\rangle,
    \quad
    \bm S(\bm Q)=\bm Q\bm E^\top/\tau.
\]
Since $\bm S(\bm Q)$ is continuous in $\bm Q$, every
$g_{\bm A}$ is continuous.
Let $\bm A_{\bm y}$ denote the assignment corresponding to
$\bm y$. By assumption,
\[
    g_{\bm A_{\bm y}}(\bm Q_0)
    >
    \max_{\bm A\neq\bm A_{\bm y}}
    g_{\bm A}(\bm Q_0).
\]
Because $\mathcal A_{n,M}$ is finite and the margin is positive,
the same strict inequality holds in some open neighborhood
$\mathcal N(\bm Q_0)$. Hence, $\bm A_{\bm y}$ remains the unique
optimal assignment throughout this neighborhood, so every
$\bm Q\in\mathcal N(\bm Q_0)$ is decoded to $\bm y$. Therefore,
\[
    \mathbb P\!\left[
        \mathsf D(\bm Q,\bm E)=\bm y
    \right]
    \geq
    p_\theta(
        \mathcal N(\bm Q_0)\mid\bm h,\bm P
    )
    >0.
\]
\end{proof}

\subsection{Proof of Proposition~\ref{prop:multi_sample_discovery}}
\label{app:proof_multi_sample_discovery}

Each independently generated proposal belongs to
$\mathcal T$ with probability $p_{\mathcal T}$. Hence, the probability
that none of the $K$ proposals belongs to $\mathcal T$ is
$(1-p_{\mathcal T})^K$. Taking the complement gives
\[
    \mathbb P
    \left[
        \exists k\in[K]:
        \bm y^{(k)}\in\mathcal T
    \right]
    =
    1-(1-p_{\mathcal T})^K.
\]
This probability is nondecreasing in $K$ and converges to one whenever
$p_{\mathcal T}>0$.
\hfill$\square$
\section{Experimental Details}
\label{appdix:experiments}

\subsection{Datasets and Preprocessing}
\label{appdix:datasets}

For every request, the reranker selects an ordered slate of six
distinct items from a request-specific candidate pool. All compared
methods use identical data splits, candidate pools, input features,
and evaluation labels.

\textbf{ML-1M}~\citep{harper2016movielens} and
\textbf{Amazon-Books}~\citep{he2016ups} contain timestamped user-item interactions but do not provide request-level candidate pools. We first apply iterative 20-core filtering, retaining only users and
items with at least 20 interactions. The remaining interactions of each user are sorted chronologically and partitioned into
non-overlapping lists of length six. Users with fewer than three
complete lists are discarded.

For each retained user, the last list is used for testing, the
penultimate list is used for validation, and all preceding lists are
used for training. For every target list, its historical input contains
all interactions strictly preceding that list; neither the target
items nor any future interactions are included in the history. This
chronological construction prevents information leakage across
training, validation, and test instances.

BPR-MF~\citep{rendle2009bpr} is trained using only interactions from
the training portion and tuned on the validation portion to emulate
the upstream retrieval stage. For each reranking instance, the six
target items are combined with highly ranked BPR-MF items. Additional
candidates are drawn from the top-200 retrieval results after
duplicate removal until a pool of 50 distinct items is obtained. The
retriever, histories, candidate pools, and data partitions are fixed
across all compared methods.

\textbf{RecFlow}~\citep{liu2024recflow} provides request-level logs collected
from multiple stages of an industrial recommendation pipeline. We
follow the official setting and use its second period, from February
5 to February 18. For each request, we retain the top 120 items
entering the reranking stage and preserve their upstream ranking
positions as input features. The model inputs include video
identifiers, category- and author-related attributes, and the user's
50 most recent interactions.

Feedback labels are observed only for items recorded in
\texttt{realshow}. An exposed item is labeled positive if it receives
effective-view feedback, while all remaining candidates, including
unexposed items, are assigned a negative label. This conservative
labeling protocol introduces severe class imbalance and partially
explains the lower absolute metric values on RecFlow. Although it
does not eliminate exposure bias, it is applied identically to all
methods using the same requests, candidate pools, and labels, thereby
supporting a controlled relative comparison.

\subsection{Baseline Descriptions}
\label{appdix:baselines}

We compare \textbf{DIRECTOR} with representative methods covering
point-wise ranking, context-aware reranking, autoregressive generation,
and Generator--Evaluator reranking.

\textbf{Generator-Only Methods}

\textbf{DNN} is a standard point-wise ranking model that concatenates
user, item, and request features and predicts an independent relevance
score for each candidate through a multilayer perceptron. The final
slate is obtained by sorting candidates according to their predicted
scores.

\textbf{DCN} extends the point-wise DNN baseline with explicit
feature-crossing layers. It models interactions among user, item, and
contextual features, while still scoring each candidate independently
before sorting.

\textbf{Seq2Slate}~\cite{bello2018seq2slate} formulates reranking as
sequential slate generation. A pointer-network decoder selects one
item at each position conditioned on the request context, candidate
representations, and the previously generated prefix. Selected items
are masked to prevent duplicate outputs.

\textbf{DLCM}~\cite{ai2018learning} applies a recurrent contextual
model to the initial candidate ranking. Candidate representations are
updated using information from the surrounding list, after which
refined scores are used to construct the final slate.

\textbf{PRM}~\cite{pei2019personalized} employs self-attention to
model mutual influences among candidate items and incorporates
personalized user information into reranking. The final slate is
obtained by sorting the resulting context-aware scores.

\textbf{SetRank}~\cite{pang2020setrank} uses permutation-invariant set
modeling to encode the candidate pool. It captures global candidate
interactions without relying on the input order and produces a
contextualized score for each item.

\textbf{Generator-Evaluator Methods}

\textbf{PIER}~\cite{shi2023pier} adopts a Generator--Evaluator
architecture for permutation-level reranking. Its Generator produces
multiple candidate permutations, while its Evaluator estimates their
list-wise utilities and selects the final output.

\textbf{NAR4Rec}~\cite{ren2024nar4rec} predicts all target positions
in parallel. It introduces matching and sequence-level objectives to
reduce inconsistent or duplicate position-wise predictions, together
with an additional decoding procedure for constructing a valid slate.

\textbf{JDRec}~\cite{zhao2024jdrec} is a practical actor--critic
framework for combinatorial recommendation. Its actor learns a
slate-generation policy, while the critic provides reward guidance
for policy optimization and deployment-oriented training.

\textbf{OMGRec}~\cite{xu2026omgrec} directly constructs a
request-conditioned position--candidate allocation matrix and applies
one-time matching to obtain a valid permutation. It further introduces
permutation-level representations to model the overall slate structure.

\textbf{DIRECTOR-CVAE} instantiates the dynamic-index generator with a
conditional variational autoencoder. During training, the posterior is
conditioned on the target index matrix and request representation,
whereas the conditional prior is used to sample complete index matrices
at inference. All target positions are generated jointly in a single
forward pass.

\textbf{DIRECTOR-DIFF} uses a conditional diffusion model to generate
the dynamic-index matrix. Starting from Gaussian noise, the reverse
process jointly denoises all position indices at each step. Both
variants use capacity-constrained soft transport for conflict-aware
training and directly apply global hard matching to the similarity
matrix at inference.

\subsection{Main Comparison Protocol}
\label{appdix:main_protocol}

To ensure a fair comparison, all methods use identical data splits, user histories, candidate pools, input features, and candidate ordering. Each method outputs an ordered slate of six distinct items. Baselines retain their original duplicate-resolution strategies, whereas \textbf{DIRECTOR} enforces item exclusivity through global hard matching.

All Generator--Evaluator methods use the same proposal budget $K=20$. For each baseline, the proposals are produced using the sampling strategy described in its original paper and implementation, thereby preserving its native stochastic generation procedure and decoding mechanism. We standardize only the number of sampled
proposals and their final rescoring: the same frozen Evaluator selects the highest-scoring slate from the 20 proposals. Generator-only methods directly output a single final slate.

\subsection{Evaluation Metrics}
\label{appdix:metrics}

Let $\bm y=(y_1,\ldots,y_6)$ denote the output slate,
$\operatorname{rel}(y_i)\in\{0,1\}$ the relevance label of item
$y_i$, and $\mathcal C^{+}$ the set of relevant items in the
candidate pool. We compute
\[
\begin{aligned}
&\operatorname{DCG@6}
=
\sum_{i=1}^{6}
\frac{2^{\operatorname{rel}(y_i)}-1}{\log_2(i+1)},\ \operatorname{NDCG@6}
=
\frac{\operatorname{DCG@6}}{\operatorname{IDCG@6}},\\
&\operatorname{Precision@6}
=
\frac{1}{6}\sum_{i=1}^{6}\operatorname{rel}(y_i),\ 
\operatorname{Recall@6}=
\frac{\sum_{i=1}^{6}\operatorname{rel}(y_i)}
{|\mathcal C^{+}|},\\
&\operatorname{F1@6}
=
\frac{
2\cdot\operatorname{Precision@6}\cdot\operatorname{Recall@6}
}{
\operatorname{Precision@6}+\operatorname{Recall@6}
}.
\end{aligned}
\]
Requests with $|\mathcal C^{+}|=0$ are excluded from recall-based
metrics, and F1@6 is set to zero when both precision and recall are
zero. All metrics are computed per request and then averaged over
the test set.

\subsection{Implementation Details}
\label{appdix:implementation}

All offline tasks are implemented in PyTorch and optimized with
Adam~\cite{kingma2015adam}. Unless otherwise specified, the embedding dimension is 64, the batch size is 2,048, the initial learning rate is $10^{-3}$, and the target slate length is six. Candidate embeddings and request representations are computed once per request and shared across all proposals. The Evaluator remains frozen during training.

The proposal budget is identical for all Generator-Evaluator methods. The temperature $\tau$, entropy coefficient $\mu$, loss weights $\alpha$ and $\lambda$, latent dimension, and generator-specific parameters are selected on the validation set. For \textbf{DIRECTOR-DIFF}, this also includes the diffusion schedule and number of reverse steps. 

The controlled offline experiments use a lightweight point-wise relevance model. Given the request context and candidate features, it predicts an item-level relevance score $r_{\phi}(y_i\mid\bm x,\mathcal C)$ and is trained separately using the standard binary cross-entropy loss. The scalar reward of a slate is obtained by directly summing its item-level scores:
\[
R_\phi(\bm y\mid\bm x,\mathcal C)
=
\sum_{i=1}^{n}
r_\phi(y_i,i\mid\bm x,\mathcal C).
\]
The Evaluator is pretrained before Generator optimization and remains frozen throughout the training of all DIRECTOR variants. The same frozen Evaluator is used for reward-guided optimization and final proposal selection.

All methods use the same candidate pools and evaluation scripts. Each offline experiment is repeated with five random seeds, and the reported results are averaged over the five runs. The code will be released upon acceptance.

\begin{table}[t]
\centering
\caption{Ablation study of \textbf{DIRECTOR-CVAE} on RecFlow.
Global hard matching is retained in all variants.}
\label{tab}
\setlength{\tabcolsep}{4.5pt}
\renewcommand{\arraystretch}{1.08}
\begin{tabular}{lcccc}
\toprule
Model & N@6 & P@6 & R@6 & F1@6 \\
\midrule
w/o Transport Guidance
& 0.1675 & 0.0812 & 0.2110 & 0.1114 \\

w/o Reward Guidance
& 0.1847 & 0.0903 & 0.2342 & 0.1231 \\

w/ Global Credit
& 0.1894 & 0.0921 & 0.2395 & 0.1253 \\

\textbf{DIRECTOR-CVAE}
& \textbf{0.1979} & \textbf{0.0957}
& \textbf{0.2477} & \textbf{0.1305} \\
\bottomrule
\end{tabular}
\end{table}

\subsection{Ablation Study}
\label{appdix:ablation}

Table~\ref{tab} evaluates the main training components of
\textbf{DIRECTOR-CVAE}. All variants retain the same dynamic-index
generator and global hard matching at inference; therefore, every
output remains complete and duplicate-free.

\textbf{w/o Transport Guidance} replaces the capacity-constrained transport plan with independent row-wise normalization in the matching and credit-assignment losses. Its substantial degradation shows that modeling competition over shared candidate capacity provides important cross-position supervision. \textbf{w/o Reward Guidance} removes $\mathcal L_{\mathrm{CA}}$ and optimizes only index generation and supervised matching, confirming the benefit of aligning the Generator with list-wise utility. \textbf{w/ Global Credit} assigns the same slate-level advantage $R_\phi(\bm y)-R_\phi(\bm b)$ to every position instead of using the prefix-anchored credits $\{\Delta_i\}_{i=1}^{n}$. Its lower performance demonstrates that position-specific credit assignment provides more informative optimization signals. The complete model performs best across all metrics, validating the complementary effects of transport-guided learning and prefix-anchored credit assignment.

\subsection{Online A/B Testing Protocol}
\label{appdix:online_protocol}

We deploy \textbf{DIRECTOR} in the reranking stage of the
single-column feed on Kuaishou's main application. The platform
supports multi-generator serving, allowing proposals from multiple
generation routes to be jointly recalled and evaluated.
\textbf{DIRECTOR-CVAE} and \textbf{DIRECTOR-DIFF} are therefore both
deployed as parallel Generator channels and contribute proposals to
the same downstream Evaluator.

The production baseline and the joint DIRECTOR deployment are assigned
to two mutually exclusive traffic buckets, each receiving 10\% of the
total traffic, and are evaluated for seven consecutive days. Both
groups share the same upstream modules, candidate pools, input
features, Evaluator, and multi-objective serving targets. Accordingly,
the reported effectiveness results represent the aggregate lift of
deploying the two DIRECTOR variants together, while the reported
serving-efficiency result is averaged across the two deployed
generation routes rather than attributed to either variant
individually.

Due to confidentiality, we cannot disclose the complete
architecture, beam configuration, absolute machine count, or detailed resource-accounting procedure of the production baselines. We therefore report only normalized relative resource consumption under
strictly matched serving constraints, including approximately 20,000 QPS, P99 end-to-end latency below 30\,ms, and service availability of at least 99\%. The reported CPU reduction measures the relative resources required by the two systems to satisfy these same production constraints.

\end{document}